\documentclass[11pt,letterpaper]{article}
\usepackage[margin=1in]{geometry}
 \usepackage{authblk}
\usepackage{graphicx,amssymb,amsmath,xcolor}
\usepackage{subcaption}  

\newtheorem{theorem}{Theorem}[section]

\newtheorem{lemma}[theorem]{Lemma}
\newtheorem{remark}[theorem]{Remark}

\newcommand\blfootnote[1]{%
  \begingroup
  \renewcommand\thefootnote{}\footnote{#1}%
  \addtocounter{footnote}{-1}%
  \endgroup
}
\newenvironment{proof}{\textbf{Proof:}}{\hfill$\square$}

\title{Finding Cliques in Geometric  Intersection Graphs  with Grounded or Stabbed Constraints}

\author{J. Mark Keil}
\author{Debajyoti Mondal}
\affil{Department of Computer Science, University of Saskatchewan, Saskatoon,  Canada} 
 
\date{}

\begin{document}

\maketitle
\begin{abstract}
A geometric intersection graph is constructed over a set of geometric objects, where each vertex represents a distinct object and an edge connects two vertices if and only if the corresponding objects intersect. We examine the problem of finding a maximum clique in the intersection graphs of segments and disks under grounded and stabbed constraints. In the grounded setting, all objects lie above a common horizontal line  and touch that line. In the stabbed setting, all objects can be stabbed with a common line. 

\begin{itemize}
    \item We prove that finding a  maximum clique is NP-hard for the intersection graphs of upward rays. This strengthens the previously known NP-hardness for ray graphs and settles the open question for the grounded segment graphs. The hardness result holds in the stabbed setting. 

\item  We show that the problem is polynomial-time solvable for intersection graphs of grounded unit-length segments, but NP-hard for stabbed unit-length segments. 

\item We give a polynomial-time algorithm for the case of grounded  disks. If the grounded constraint is relaxed, then we give an $O(n^3 f(n))$-time $3/2$-approximation for disk intersection graphs with radii in the interval $[1,3]$, where $n$ is the number of disks and $f(n)$ is the time to compute a maximum clique in an $n$-vertex cobipartite graph. This is faster than previously known randomized EPTAS, QPTAS, or 2-approximation algorithms for arbitrary disks. We obtain our result by proving that pairwise intersecting disks with radii in $[1,3]$ are 3-pierceable, which extends the 3-pierceable property from the long known unit disk case to a broader class.
\end{itemize}
\end{abstract}

\section{Introduction}\blfootnote{This work was supported by Natural Sciences and Engineering Research Council of Canada (NSERC) through a Discovery grant and an Alliance grant.}
Let $\mathcal{S}$ be a set of geometric objects. A \emph{geometric intersection graph} $G(\mathcal{S})$ is a graph where each vertex represents a distinct object in $\mathcal{S}$ and an edge connects two vertices if and only if the corresponding objects intersect. In this paper we consider the problem of finding a maximum  \emph{clique} in $G(\mathcal{S})$, i.e., a largest set of pairwise intersecting objects from $\mathcal{S}$.  A rich body of research has examined the maximum clique problem for intersection graphs of various types of geometric objects~\cite{DBLP:conf/focs/BonamyBBCT18,DBLP:journals/dcg/CabelloCL13,DBLP:journals/dam/FelsnerMW97,gavril2000maximum,DBLP:conf/compgeom/KeilM25}. In this paper we consider geometric objects in $\mathbb{R}^2$ and primarily focus on  segments and disks.

\smallskip
\noindent
{\bf Intersection Graphs of Segments.} The maximum clique problem on the segment intersection graph was first posed as an open problem by Kratochv{\'\i}l and Ne{\v{s}}et{\v{r}}il~\cite{kratochvil1990independent} and remained unresolved for two  decades. It was widely regarded as a challenging open problem~\cite{bang2006six} and finally in ESA 2012,  Cabello, Cardinal, and Langerman~\cite{cabello2012clique} settled the problem by proving the NP-hardness for computing a maximum clique in ray intersection graphs. A rich body of research examines generalizations of segment graphs, e.g., \emph{string graphs}, which are   intersection graphs of curves in the plane. In SODA 2011, Fox and Pach~\cite{FoxP11} gave a subexponential time algorithm for computing a maximum clique in $k$-intersecting string graphs. Here a string graph is called \emph{$k$-intersecting} if every pair of  strings have at most $k$ common points. Therefore, the result applies to segment intersection graphs. They also gave a polynomial-time algorithm  to approximate the size of a maximum clique in a $k$-intersecting string graph  within a factor of 
$n^{1-\epsilon_k}$, where  $\epsilon_k>0$ is a parameter that depends on $k$.  They did not provide any detailed derivation for $\epsilon_k$, which relies on the property of specific separators in $k$-intersecting graphs. To the best of our knowledge, this is still the best known approximation for maximum clique in segment intersection graphs. 

\smallskip
\noindent
{\bf Disk Intersection Graphs.} In 1990,  
Clark, Colbourn and Johnson~\cite{DBLP:journals/dm/ClarkCJ90} showed that the maximum clique problem is polynomial-time solvable in the case of unit disk graphs. Subsequent research attempted to accelerate the algorithm~\cite{breu,DBLP:journals/dcg/EppsteinE94,DBLP:conf/compgeom/EspenantKM23}; however, extending this result to disk graphs with even two distinct radii types remained open for a decade~\cite{c1,c2}. In SoCG 2025, Keil and Mondal~\cite{DBLP:conf/compgeom/KeilM25}  settled this open problem showing that a maximum clique can be computed in polynomial time for disk intersection graphs with a fixed number of radii types.   However, the time-complexity question remains open for general disk graphs~\cite{DBLP:conf/waoa/Fishkin03,ambuhl2005clique}, even when the disk radii are restricted to lie within the interval $[1,1 + \epsilon]$ for some  $\epsilon>0$~\cite{DBLP:conf/compgeom/BonnetG0RS18}. The maximum clique problem is NP-hard for the intersection graph of ellipses~\cite{ambuhl2005clique}, where  the ratio of the larger over the smaller radius is some prescribed number larger than 1. However, this hardness result  applies neither to the segments nor to the disks.


\smallskip
\noindent
{\bf Grounded and Stabbed Settings.} In this paper we consider the maximum clique problem for segment and disk intersection graphs under grounded and stabbed settings. In the grounded setting, all objects lie above a common horizontal line $\ell$ and touch  $\ell$. In the stabbed setting, all objects can be stabbed with a common line. Such restrictions have been widely studied in the literature across various problems  with the goals of understanding the structure of specific subclasses~\cite{cabello2017refining,Jelinek019,chakraborty2024recognizing,DBLP:journals/jgaa/CardinalFMTV18}, designing efficient approximation algorithms for the general case~\cite{chan2018stabbing,agarwal2006independent}, and understanding the gap between polynomial-time solvable and NP-hard variants~\cite{DBLP:journals/comgeo/BoseCKM0MS22,bandyapadhyay2019approximating,keil2017algorithm,liu2023geometric}.  

A pioneering result on the maximum clique problem in intersection graphs of grounded segments was established by Middendorf and Pfeiffer~\cite{MiddendorfP92}. They  showed that the problem is polynomial-time solvable for  
grounded segments if the \emph{free endpoints}, which are not on the ground line,  lie on a fixed number of horizontal lines. They also showed that the problem is NP-hard for intersection graphs of one-bend   polylines. Later, Keil, Mondal and Moradi~\cite{DBLP:journals/corr/abs-2107-05198} established NP-hardness for grounded one-bend polylines, even when their free endpoints and bends are restricted to lie on two horizontal lines.  They posed the open question of whether the case of grounded segments is polynomial-time solvable, which was later asked even under the stricter condition where the grounded  segments are of unit length~\cite{DBLP:conf/cccg/cccg2024}. The NP-hardness result for grounded one-bend polylines naturally applies to the stabbed setting. Similarly, the NP-hardness for ray intersection graphs~\cite{cabello2012clique} can be seen as the case  where  all segments are grounded on a circle or stabbed by a circle.

To the best of our knowledge, the intersection graphs of grounded disks have remained unexplored, but several studies have considered various forms of stabbing. For example, Breu\cite{breu} studied unit disk graphs with the restriction when all disks lie within a strip of width $\sqrt{3}/2$, and here the disks can be stabbed by the bisector line of the strip. Another widely studied setting is point stabbing or piercing. Every set of mutually intersecting disks can be  pierced   by four points~\cite{carmi2023stabbing,danzer1986losung,stacho1981solution}, and three points suffice for unit disks~\cite{biniaz2023simple,hadwiger1955ausgewahle}. This idea is useful for designing a 2-approximation in $O(n^4\cdot  f(n))$ time~\cite{ambuhl2005clique} for $n$ disks as follows. First, compute the arrangement $\mathcal{A}$ of the disks, and then compute for each pair of cells $c,c'\in \mathcal{A}$, a maximum clique in the intersection graph $G(c,c')$ of the disks hit by $c,c'$. Here $f(n)$ is a polynomial in $n$, which represents the time to compute a maximum clique in the cobipartite graph $G(c,c')$. A subexponential-time QPTAS and a randomized EPTAS were developed in FOCS 2018~\cite{DBLP:conf/focs/BonamyBBCT18}, but both are   slower than the straightforward 2-approximation.


\smallskip
\noindent
{\bf Contributions.}
In this paper we show that finding a maximum clique is NP-hard in the intersection graphs of  upward rays. This implies NP-hardness both for grounded segments and for unit-length segments that are stabbed by a line, which settles the time-complexity question posed in~\cite{DBLP:journals/corr/abs-2107-05198}. The NP-hardness reduction is achieved by showing that every planar graph $G$ has an even subdivision whose complement can be represented as an upward ray graph, whereas the previous NP-hardness result for ray graphs~\cite{DBLP:journals/dcg/CabelloCL13} used both upward and downward rays to find a geometric representation for $G$. The construction of rays in~\cite{DBLP:journals/dcg/CabelloCL13} relies on a symmetric ray arrangement around a circle, making it unclear how to adapt the method to enforce upwardness of the rays. Instead, we give an incremental construction in which each newly inserted ray is carefully rotated so that all rays remain upward.

In contrast, we give polynomial-time algorithms for two restricted cases. One is for grounded unit-length segments, which settles the time-complexity question posed in~\cite{DBLP:conf/cccg/cccg2024}. Here we reduce the problem to   computing a maximum clique in a cocomparability graph, which is known to have a polynomial-time solution~\cite{gavril2000maximum}. The other is for grounded segments with the free endpoints lying on $O(1)$ lines, which strengthens the result of~\cite{MiddendorfP92} that assumes free endpoints to be on horizontal lines. Since we allow lines that are not necessarily horizontal, the dynamic programming approach of~\cite{MiddendorfP92}, which relies on subproblem decomposition using the heights of the horizontal lines, no longer applies. Our algorithm also employs a dynamic program, but to handle lines of arbitrary slopes, we  prove that only a few  segments per line need to be tracked when performing the subproblem decomposition. For intersection graphs of arbitrary grounded segments, we give an $O(n^{3/4})$-approximation for the size of a maximum clique in polynomial time  using a simpler approach than that of~\cite{FoxP11}.

For grounded disks, we give a polynomial-time solution to  compute a maximum clique. If the grounded constraint is relaxed, then we compute a $3/2$-approximation for $[1,3]$-disks in $O(n^3\cdot f(n))$-time. Here  $f(n)$ is the time to find a maximum clique in an $n$-vertex cobipartite graph. Although we impose restriction on radii, the running time is faster than the previously known 2-approximation algorithm~\cite{ambuhl2005clique}, QPTAS, and randomized EPTAS~\cite{DBLP:journals/jacm/BonamyBBCGKRST21} that work for arbitrary disks. However, the central  contribution here is our underlying proof that every set of pairwise intersecting $[1,3]$-disks can be pierced by 3 points. This extends the 3‑pierceable property, which was long known for unit disks~\cite{biniaz2023simple,hadwiger1955ausgewahle}, to the broader class of [1,3]-disks.

The following table gives a summary of the contribution.

\begin{table}[h]
\centering
\begin{tabular}{|l|c|c|c|c|}
\hline
\shortstack[l]{Objects $\rightarrow$\\ Setting $\downarrow$} &
\shortstack{Upward Rays/\\ Segments} &
\shortstack{Unit-\\Length\\ Segments} &
\shortstack{Segments with \\Endpoints on\\ $O(1)$ lines } &
\shortstack{Disks\\ \phantom{-}\\ \phantom{-}} \\
\hline

\shortstack{Grounded\\ \phantom{-}\\ \phantom{-}} &
\shortstack{NP-hard -- Th.~\ref{thm:uprayhard}\\  $O(n^{3/4})$-approx. \\  Th.~\ref{thm:approx}
} &
\shortstack{Poly-time\\ Th.~\ref{thm:unit}} &
\shortstack{Poly-time\\ Th.~\ref{thm:fixedlines}} &
\shortstack{Poly-time -- Th.~\ref{thm:polydisk}\\ \phantom{-}\\ \phantom{-}} \\ \hline

\shortstack[l]{Stabbed\\ by Line} &
\shortstack{NP-hard \\ Th.~\ref{thm:uprayhard}} &
\shortstack{NP-hard\\ Th.~\ref{thm:uprayhard}} &
Open &
\shortstack{  
 3/2-approx. for [1,3]-disks \\in $O(n^3f(n))$-time -- Th.~\ref{thm:approxdisk}} \\ \hline
\end{tabular}
\end{table}

\section{NP-Hardness for Intersection Graphs of Upward Rays, Grounded Segments, and Stabbed Unit-Length Segments}

In this section we show that the maximum clique problem is NP-hard for intersection graphs of upward rays. Every graph admitting  an upward ray representation also has a grounded segment or stabbed unit-length segment representation, as follows.  Given an upward ray representation, consider a horizontal line $L$ with sufficiently high $y$-coordinate such that all intersection points of the rays are below $L$, remove the parts of the rays that are above $L$, flip the representation vertically, and treat $L$ as the ground line. To obtain a unit-length segment representation, we extend the segments below $L$ so that all segments obtain the same length. Since all intersection points of the rays were on the other side of $L$, the extended parts do not create any new intersection point. Consequently, our NP-hardness result implies NP-hardness for intersection graphs of grounded segments and stabbed unit-length segments

The proof idea is to show that every planar graph has an even subdivision whose complement has an intersection representation with upward rays. Since finding an independent set in an even subdivision of a planar graph is NP-hard~\cite{DBLP:journals/dcg/CabelloCL13}, we obtain the NP-hardness for the maximum clique problem on upward ray graphs. 
We first briefly describe the required subdivisions of planar graphs (used in ~\cite{DBLP:journals/dcg/CabelloCL13}), and then prove that they are intersection graphs of upward rays.  When  constructing the upward ray representation, we do not enforce polynomially bounded coordinates.  
Therefore, our NP-hardness result holds under the assumption that the input is given as a 
combinatorial representation of upward ray graphs, rather than a geometric representation.

\begin{figure}[h]
    \centering
    \includegraphics[width=\linewidth]{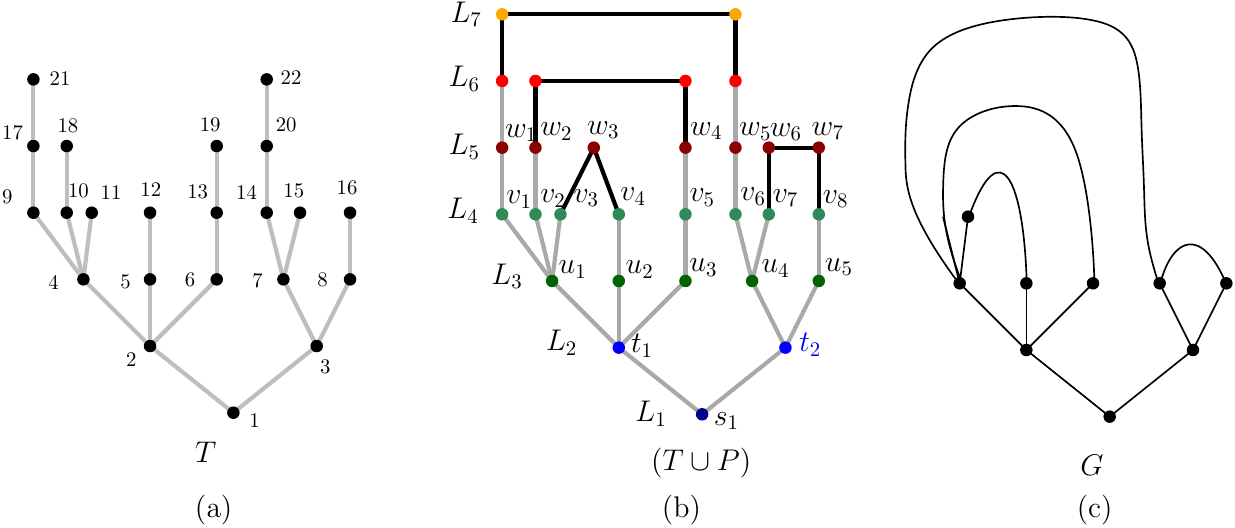}
    \caption{(a) An embedded tree. (b) An even subdivision of $ (T\cup P)$, where $T$ and $P$ are shown in gray and black, respectively. (c) The planar graph underlying $(T\cup P)$.}
    \label{fig:h0}
\end{figure}
\subsection{Even Subdivisions of Planar Graphs}  Let $T$ be an embedded tree and let $\sigma$ be the order of its nodes in breadth-first traversal that starts at the root and visits its children in clockwise order, and then at each internal node, considers its children in clockwise order starting from its  parent. Figure~\ref{fig:h0}(a) illustrates an embedded tree where nodes are ordered using such a breadth-first traversal. By \emph{the $i$th level} of $T$, where $i\ge 1$, we denote the ordered subset of nodes from $\sigma$ that are at distance $(i-1)$ from the root of $T$. By $\sigma_i$ we denote the ordered sequence of vertices at the $i$th level.    


An \emph{admissible extension} of $T$ is a set $P$ of vertex-disjoint paths with the following properties:   (a) Each maximal path in $P$ has 3 or 4 vertices. (b) The endpoints of each maximal path in $P$ are leaves of $T$ that are consecutive and at the same level. (c) The internal vertices of any path in $P$ are not vertices of $T$.

Figure~\ref{fig:h0}(b) shows an admissible extension of a tree $T$, where the edges of $T$ are shown in gray and the edges of $P$ are shown in black, respectively. By $\overline{(T\cup P)}$ we denote the complement of the graph $(T\cup P)$. An \emph{even subdivision} of a graph $G$ is a graph $G'$ that for each edge $(u,v)$ of $G$, either retains the edge or replaces it with a path of  even number of division vertices. We will use the following lemma.

\begin{lemma}[Cabello, Cardinal, and Langerman~\cite{DBLP:journals/dcg/CabelloCL13}] 
Any embedded planar graph $G$ has an even subdivision $(T \cup P)$, where $T$ is an embedded tree and $P$ is an admissible extension of $T$. Furthermore, such $T$ 
and $P$ can be computed in polynomial time.
\end{lemma}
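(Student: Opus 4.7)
The plan is to build the required decomposition $(T\cup P)$ as an even subdivision of $G$ around a BFS spanning tree, leaving each tree edge of $G$ inside $T$ and subdividing each non-tree edge into a short path whose middle vertices form a maximal $P$-path while the vertices adjacent to the original endpoints are absorbed as new leaves of $T$. First I would fix an embedding of $G$, compute a BFS spanning tree $T_0$ rooted at a vertex $r$, and order the children at each node clockwise so that the traversal $\sigma$ is consistent with the embedding.

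Next, I would process each non-tree edge $(u,v)$ according to whether $\ell(u)=\ell(v)$ or $|\ell(u)-\ell(v)|=1$ (the only possibilities under BFS, since any edge satisfies $|\ell(u)-\ell(v)|\le 1$). In the same-level case I would subdivide $(u,v)$ by four fresh vertices $s_1,s_2,s_3,s_4$ along the planar arc of $e$, add the edges $(u,s_1)$ and $(v,s_4)$ to $T$, and place the $4$-vertex path $s_1 s_2 s_3 s_4$ in $P$; the endpoints $s_1, s_4$ become leaves of $T$ at the common level $\ell(u)+1$. In the cross-level case with $\ell(u)<\ell(v)$, I would again subdivide with four fresh vertices $s_1,s_2,s_3,s_4$ but add $(u,s_1),(s_1,s_2),(v,s_4)$ to $T$ and put the $3$-vertex path $s_2 s_3 s_4$ in $P$; the endpoints $s_2, s_4$ are then leaves at the common level $\ell(u)+2=\ell(v)+1$. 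Each non-tree edge therefore receives an even number (four) of subdivision vertices, so $(T\cup P)$ is an even subdivision of $G$; every maximal $P$-path has $3$ or $4$ vertices with internal vertices outside $T$, which handles conditions (a), (c), and the same-level part of (b).

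The step I expect to be the main obstacle is enforcing the \emph{consecutiveness} required in (b): the two endpoint leaves of each $P$-path must occupy adjacent positions in $\sigma_i$ at their common level $i$. In the same-level case this reduces to requiring $u$ and $v$ themselves to be consecutive in $\sigma_{\ell(u)}$; in the cross-level case it reduces to requiring $s_1$ (a child of $u$) and $v$ to be consecutive in $\sigma_{\ell(u)+1}$. To achieve this I would insert the subdivision path for $(u,v)$ along the planar arc of the original edge so that, in the cyclic order around $u$, the new edge to $s_1$ occupies precisely the slot previously held by $e$, and symmetrically around $v$. An induction on BFS levels, exploiting the fact that $u$ and $v$ share a face once the non-tree edge $e$ is drawn, should establish that the newly inserted leaves land adjacently in $\sigma_i$. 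Configurations in which a single vertex is incident to several non-tree edges may require a delicate scheduling argument to align all such pairs simultaneously; if the direct construction still fails in some awkward case, a fallback is to subdivide a few tree edges of $T_0$ (by an even number of vertices) so as to insert spacer leaves that push the relevant parents into consecutive positions at their level.

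Finally, BFS, edge subdivision, and embedding-aware placement of the new leaves are each polynomial in $|V(G)|+|E(G)|$, so $T$ and $P$ can be output in polynomial time, completing the proof.
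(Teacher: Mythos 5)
This lemma is quoted from Cabello, Cardinal, and Langerman and is not proved in the paper, so there is no in-paper proof to compare against; judging your attempt on its own terms, it has a genuine gap exactly at the step you flag as the main obstacle, and the surrounding construction is too rigid to close it. With only four subdivision vertices per non-tree edge, the two new leaves $s_1$ and $s_4$ hang a single level below $u$ and $v$ (or two below $u$ in the cross-level case). In the ordering $\sigma_{i+1}$ the vertices appear grouped by parent, in the order their parents occur in $\sigma_i$; hence $s_1$ and $s_4$ can be consecutive only if $u$ and $v$ are consecutive at their level, $s_1$ is the last child of $u$, $s_4$ is the first child of $v$, and no vertex between them has any child. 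None of this holds for a general embedded planar graph: any tree child of $u$ occurring clockwise after the slot of $(u,v)$, any second non-tree edge at $u$, or any vertex between $u$ and $v$ at their level with descendants destroys consecutiveness. Your fallback of subdividing tree edges to insert ``spacer leaves'' makes matters worse rather than better, since extra leaves at level $i+1$ can only add vertices between $s_1$ and $s_4$, and lengthening tree edges shifts levels without changing the left-to-right interleaving within a level.

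The missing idea is that the subdivision of a non-tree edge must be allowed to be long: one hangs an even-length pendant path from $u$ inside $T$ and another from $v$, descending to a common level chosen per edge and deep enough---respecting the nesting structure that the non-tree edges have relative to the tree in the plane embedding---that the two tips become adjacent in $\sigma$ at that level; only then are the tips joined by the $3$- or $4$-vertex path of $P$, with the choice of $3$ versus $4$ and the pendant lengths used to keep the total number of division vertices even. Your skeleton (BFS tree, routing along the planar arc of $e$, polynomial time) is a reasonable starting point, but without variable-depth pendant paths condition (b) cannot be met, and that condition is the entire content of the lemma.
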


\subsection{Computing an Upward Ray Representation of $\overline{(T\cup P)}$}

We now show that $\overline{(T\cup P)}$ (obtained by taking the complement of $(T\cup P)$) admits an intersection representation with upward rays. We begin by introducing some notation that will be used throughout the construction. Let $r$ be a ray with origin $q$. Let $h_q$ be a horizontal line through $q$. We say $r$ is \emph{upward}  if it has a non-zero slope and lies above   $h_q$. Let $A$ be an arrangement of a set of rays (Figure~\ref{fig:h01}(a)). By a \emph{face} or a \emph{cell} of $A$ we denote a connected region of the plane delimited by the rays of $A$. By a \emph{path of $A$} we denote a simple polygonal chain in $A$. A path is called an \emph{outerpath} if all its segments lie on the unbounded face of $A$.  
A \emph{sail configuration} $\mathcal{S}$ of size $k$ is an arrangement of $k$ mutually intersecting rays $r_1,\ldots,r_k$ that contains an outerpath $e_1,\ldots, e_k$, where for each $i$ from 1 to $k$, $e_i$ is a part of $r_i$ (Figure~\ref{fig:h01}(b)). We refer to $e_1,\ldots,e_k$ as the \emph{extremal path} of $\mathcal{S}$. A sail is called \emph{upward} if it consists only of upward rays. We will often add a sail $\mathcal{S}'$ on top of another sail $\mathcal{S}$ such that each ray $r'_i$ of $\mathcal{S}'$  originates at the cell immediately above the edge $e_i$ on the extremal path of $\mathcal{S}$ (Figure~\ref{fig:h01}(c)). Since $r'_i$ intersects all the other rays except for $r_i$,  this realizes the complement of a matching.

\begin{figure}[h]
    \centering
    \includegraphics[width=\linewidth]{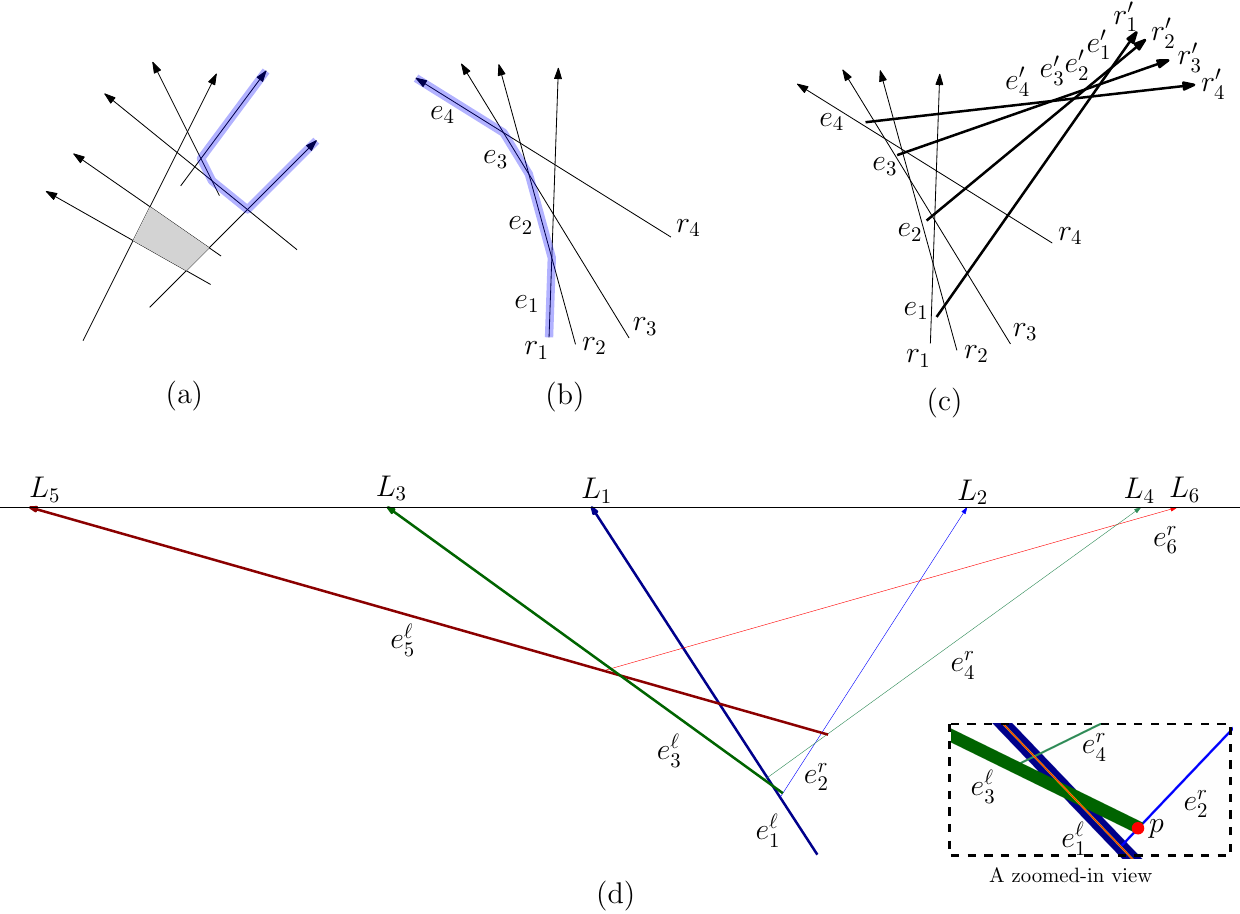}
    \caption{(a) Illustration for a face of $A$ in gray and an outerpath in blue shade. (b) An upward sail   $\mathcal{S}$, where the extremal path is in blue. (c) An upward ray representation of a complement of a matching using two sails. (d) An illustration for the sail skeleton.
    }
    \label{fig:h01}
\end{figure}

We are now ready to describe the details. We first refer the reader to  Figure~\ref{fig:h0}(b) which  illustrates an even subdivision $(T\cup P)$ of a planar graph, and Figure~\ref{fig:h01}(d) that shows a high-level sketch for the construction. We start with two upward sail configurations $\mathcal{S}_\ell$ and $\mathcal{S}_r$ of $\lceil k/2\rceil+1$ rays, where $k$ is the number of levels of $T$. 
We label the rays of  $\mathcal{S}_\ell = (L_1,L_3,\ldots)$ and $\mathcal{S}_r =(L_2,L_4,\ldots)$ with odd and even indices, respectively. 
Figure~\ref{fig:h0}(d) shows the first three rays of  $\mathcal{S}_\ell$ and  $\mathcal{S}_r$ in thick and thin rays, respectively.  
We refer to these two sails as the \emph{sail skeleton}, and their rays as \emph{skeleton rays}, which will guide the construction of the rays corresponding to $(T\cup P)$. For simplicity of presentation, we define more skeleton rays than the number of levels in $T$. Consequently, some of the high-index  skeleton rays may remain unused and can be removed after the complete construction. Assume that  $k'=\lceil k/2\rceil+1$. We define $L_1$ to be an upward ray starting at origin  with $(90^\circ+\alpha)$ angle of inclination, where  $\alpha = 90^\circ/2k'$. For each $i$ from $2$ to $2k'$, the origin of $L_i$ lies on $L_{i-1}$, and $L_i$ intersects all other skeleton  rays. We include the detailed construction of a sail skeleton in Appendix~\ref{app:sail}. 

\begin{figure}[h]
    \centering
    \includegraphics[width=\linewidth]{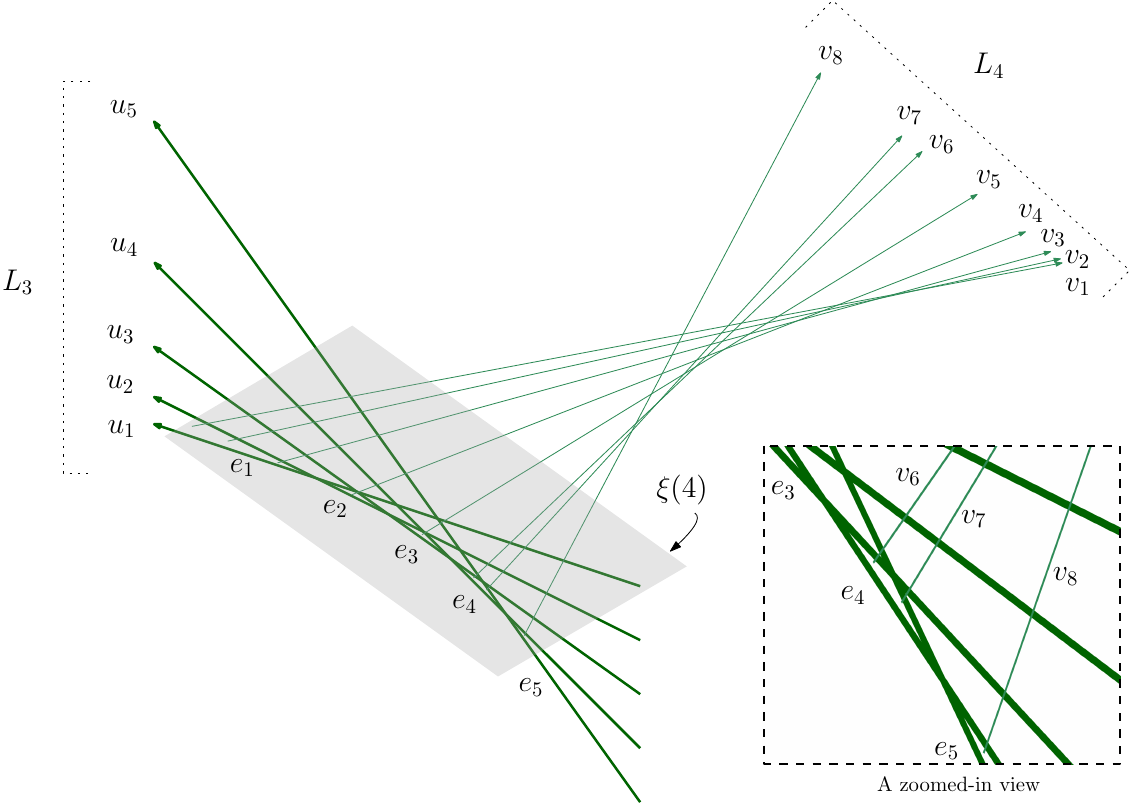}
    \caption{Construction of the vertices of two consecutive levels. }
    \label{fig:h1}
\end{figure}
\begin{figure}[h]
    \centering
    \includegraphics[width=\linewidth]{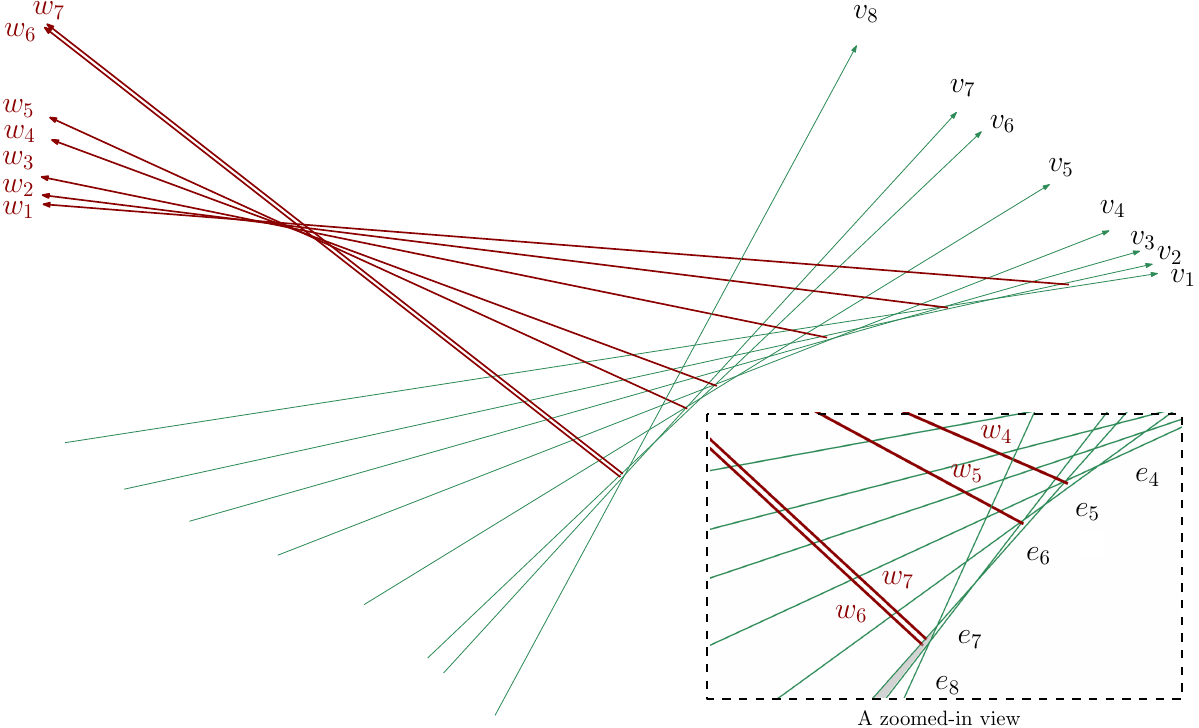}
    \caption{Illustration for the representation of some paths in $P$.}
    \label{fig:h2}
\end{figure}
\subsubsection{Construction of the Rays Corresponding to the Nodes of $T$} 

The skeleton ray $L_1$ represents the root of $T$. Let $\gamma$ be the smallest distance over all pairs of distinct intersection points in the sail skeleton. For each skeleton ray $L_i$, where $1\le i\le 2k'$,  we now consider a strip $L^s_i$ of width $\epsilon$, chosen sufficiently small relative to $\gamma$, so that the thickened rays do not completely cover any face of the sail skeleton.  

By $\xi(j)$, where $j\ge 2$, we denote the parallelogram determined by $L^s_{j-1}$ and $L^s_j$ (Figure~\ref{fig:h1}).  For each $j$th level of $T$, where $j\ge 2$, we now construct a sail configuration $R_j$ with $|\sigma_j|$ upward rays such that the origins of these rays remain bounded within $\xi(j)$ and all intersection points of the sail remain within $\xi(j+1)$. We will initially make all rays of $R_j$ to pass through a common point, and then move the rays to ensure a non-degenerate extremal path. 

Let $u_1,\ldots,u_r$ be the vertices in $\sigma_{j-1}$ and let $e_1,\ldots,e_r$ be the extremal path of sail $R_{j-1}$ (Figure~\ref{fig:h1}). For each $t$ from $1$ to $r$, we construct $c(u_t)$ upward rays, where $c(u_t)$ is the number of children of $u_t$. The origin of these rays lie inside the cell immediately above $e_t$ and the rays pass through the centroid of $\xi(j+1)$. For example, the vertex $u_4$ in Figure~\ref{fig:h1} has two children $v_6,v_7$, and thus the origins of their rays lie in the cell immediately above $e_4$.

We now consider a vertex $v$ in $\sigma_j$ and verify its adjacencies. Since we are realizing $\overline{(T\cup P)}$,   the ray for $v$ must intersect all rays corresponding to $\sigma_1,\ldots,\sigma_{j}$ except for the ray representing its parent $u_t$ in $T$. By the construction of the initial sail skeleton, the ray of $u_t$ intersects all rays  corresponding to the vertices in $\sigma_1,\ldots,\sigma_{j-2}$. Since the ray corresponding to $v$ starts at the cell immediately above $e_t$, it intersects all the rays corresponding to $\sigma_{j-1}\setminus u_t$. Finally, since the rays of $R_j$ pass through a common point of $\xi(j+1)$, the adjacencies for $v$ have been correctly realized. We now  move the rays of $R_j$  (without changing origins) to  obtain a non-degenerate sail configuration $R_j$ whose intersection points all lie within $\xi(j+1)$. Since we do not require  polynomially bounded coordinates, it is straightforward to move the rays one at a time in the same order as in $\sigma_j$ to obtain such a representation (Appendix~\ref{app:degenerate}).

\subsubsection{Construction of the Rays for the Internal Nodes of $P$}

The rays for the internal nodes of $P=(a,b,c,d)$ can be inserted in almost the same way as in the previous section. By definition of an admissible extension, the rays of $a,d$ appear consecutively on an extremal path, and hence the rays for $b,c$ can be constructed in the cell immediately above the edges $e_a$ and $e_d$ corresponding to $a,d$. Figure~\ref{fig:h2} illustrates an example where $P=(v_7,w_6,w_7,v_8)$. The details are included in Appendix~\ref{app:P}.

\smallskip
\noindent
The following theorem summarizes the result of this section. 

\begin{theorem}\label{thm:uprayhard}
    Finding a maximum clique is NP-hard for the intersection graphs of upward rays, grounded segments, and stabbed unit-length segments. 
\end{theorem}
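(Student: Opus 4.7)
The plan is a polynomial-time reduction from the maximum independent set problem on even subdivisions of planar graphs, which is NP-hard by Cabello, Cardinal, and Langerman~\cite{DBLP:journals/dcg/CabelloCL13}. Given a planar graph $G$, first apply the cited lemma to obtain in polynomial time an even subdivision $(T \cup P)$, where $T$ is an embedded tree and $P$ is an admissible extension. A maximum clique in $\overline{(T\cup P)}$ is precisely a maximum independent set of $(T\cup P)$, so it suffices to construct an upward ray representation of $\overline{(T\cup P)}$ of polynomial combinatorial size. Once the upward ray hardness is established, the conversions described at the start of Section~2 (truncating at a high horizontal line $L$ that lies above every pairwise intersection, flipping vertically, and optionally extending all segments to a common length) translate it to grounded segments and to stabbed unit-length segments, since the extensions occur on the opposite side of $L$ from all intersection points and thus introduce no new edges.

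\textbf{Geometric construction.} I would build the representation level by level through the breadth-first levels $\sigma_1,\sigma_2,\ldots,\sigma_k$ of $T$, guided by the sail skeleton $L_1,L_2,\ldots,L_{2k'}$ with $k'=\lceil k/2\rceil+1$. Thickening the skeleton rays into narrow strips $L_i^s$ of width $\varepsilon\ll\gamma$ produces the parallelograms $\xi(j)$, which serve as the geometric budgets for level $j$: the rays representing $\sigma_j$ originate inside $\xi(j)$ and all their pairwise intersections are confined to $\xi(j+1)$. Inductively, for each vertex $v\in\sigma_j$ with parent $u_t\in\sigma_{j-1}$, place the origin of $v$'s ray in the cell of $R_{j-1}$ immediately above the edge $e_t$ of the extremal path, and aim the ray through the centroid of $\xi(j+1)$. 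One checks adjacencies in three pieces: (i) by construction of the sail skeleton, the ray for $u_t$ already crosses every ray for $\sigma_1,\ldots,\sigma_{j-2}$, and placing $v$ in the cell above $e_t$ lifts this property to $v$; (ii) starting above $e_t$ forces $v$'s ray to cross every ray of $\sigma_{j-1}\setminus\{u_t\}$ but not $u_t$ itself, matching the non-edge $\{u_t,v\}$ of $T$; (iii) all rays of $\sigma_j$ pass through a common point in $\xi(j+1)$, so siblings are mutually crossing. Then perturb the rays of $R_j$, one at a time in the order of $\sigma_j$, without moving their origins, to produce a non-degenerate sail whose extremal path lies in $\xi(j+1)$, as outlined in the appendix.

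\textbf{Handling the admissible extension $P$ and reading off the reduction.} For each maximal path $P=(a,b,c,d)$ in the admissible extension, the endpoints $a,d$ are consecutive leaves at the same level, so their corresponding extremal edges $e_a,e_d$ are adjacent on some $R_j$; I would place the rays for the internal vertices $b,c$ in the cells immediately above $e_a$ and $e_d$, mirroring the construction used for $T$ and again ensuring that exactly the required non-adjacency with the endpoints is realized. Since the reduction produces a combinatorial representation of upward rays in polynomial time and the maximum clique in $\overline{(T\cup P)}$ equals the maximum independent set in the even subdivision $(T\cup P)$, NP-hardness transfers, and the translations described in the first paragraph extend it to grounded segments and to stabbed unit-length segments, completing the theorem.

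\textbf{Main obstacle.} The principal difficulty, and the reason the symmetric circular construction of~\cite{DBLP:journals/dcg/CabelloCL13} cannot be reused, is maintaining upwardness of every ray while still realizing the prescribed adjacencies. This forces the explicit sail-skeleton scaffolding together with the incremental, level-synchronous placement whose origins lie in $\xi(j)$ and whose intersection cluster lies in $\xi(j+1)$; the delicate point to verify is that the small perturbations used to make each $R_j$ non-degenerate never push any intersection out of $\xi(j+1)$ or create a spurious intersection with a future level, which is where the freedom to take $\varepsilon$ arbitrarily small relative to $\gamma$ is essential.
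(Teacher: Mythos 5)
Your proposal follows the paper's proof essentially step for step: the same reduction from independent set on even subdivisions via the Cabello--Cardinal--Langerman lemma, the same sail-skeleton scaffolding with origins in $\xi(j)$ and intersections confined to $\xi(j+1)$, the same three-part adjacency check, the same non-degenerate perturbation, and the same truncate-flip-extend conversion to grounded and stabbed unit-length segments. It is correct and matches the paper's argument.
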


\section{Approximation Results and Exact Computation}
In this section we give polynomial-time algorithms to compute a maximum clique in an intersection graph of grounded segments in two restricted cases. 
We then give a polynomial-time algorithm that achieves an $O(n^{3/4})$-approximation. 

\subsection{Segments with Endpoints on Fixed Line Set} 
\label{sec:fixedlineset}
The origin of the rays that we used to show NP-hardness for segment graphs 
cannot be covered with  $O(1)$ lines. Therefore, our  hardness result does not hold if we restrict the endpoints of the segments to lie on a fixed number of lines. We now show that such restrictions allow us to design a polynomial-time algorithm to  find a maximum clique in an intersection graph of grounded segments. However, the time-complexity for the stabbed setting under this restriction remains open. For simplicity of presentation, we assume that the endpoints of the segments are distinct. 

The idea is to design a dynamic programming algorithm by introducing a concept of strict independent set. We define a pair of grounded segments to be \emph{strictly independent} if they satisfy the following conditions:  ($C_1$) The segments  
do not intersect. ($C_2$) Extending either of the segments beyond the free endpoint (above the ground line) does not create an intersection with the other. However, if both segments are extended beyond their free endpoints, the extended parts intersect each other.  


The pairs of segments in Figures~\ref{fig:trans}(a)--(c) are not strictly independent, whereas Figure~\ref{fig:trans}(d) illustrates a strictly independent pair. 
 A \emph{strict independent set} is a collection of grounded segments where  every pair is strictly independent. We first transform the input grounded segments $S$ into another set of grounded segments $S'$ such that a strict independent set in $G(S')$  corresponds to a maximum clique in $G(S)$.    

\begin{figure}[h]
    \centering
    \includegraphics[width=.9\linewidth]{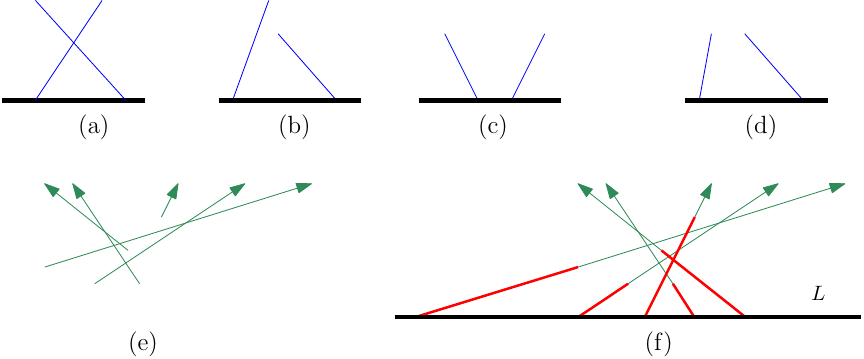}
    \caption{(a)--(c) Pairs 
    that are not strictly independent. (d) A strictly independent pair. 
    (e) An upward ray representation. 
    (f) A corresponding grounded segment representation 
    in red.  }
    \label{fig:trans}
\end{figure}
\subsubsection{Transforming $S$ into $S'$}
\label{sec:trans}
We use the result of~\cite{DBLP:journals/jgaa/CardinalFMTV18} to transform the segments of $S$ into a set of upward rays $U(S)$ such that two segments of $S$ intersect if and only if the corresponding rays of $U(S)$ intersect. Figure~\ref{fig:trans}(e) gives an example for $U(S)$. Such a transformation can be done in linear time.

We now consider the arrangement $\mathcal{A}$ of lines determined by the rays in $U(S)$ and define  a ground line $L$ such that all the intersection points of the arrangement lie above $L$. We now create the set $S'$ by considering the parts of the lines that are between the ground line $L$ and the rays in $U(S)$. Figure~\ref{fig:trans}(f) illustrates $S'$ in red. We now have the following lemma with the proof in Appendix~\ref{app:trans}. 


\begin{lemma}\label{lem:equiv}
A pair of segments in $S$ intersect if and only if the corresponding segments in $S'$ are strictly independent. 
\end{lemma}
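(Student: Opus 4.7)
The plan is to chain two equivalences. By the transformation of~\cite{DBLP:journals/jgaa/CardinalFMTV18}, two segments in $S$ intersect if and only if the corresponding rays in $U(S)$ intersect, so it suffices to prove that two rays $r_1, r_2 \in U(S)$ with origins $o_1, o_2$ intersect if and only if the two corresponding segments $s'_1, s'_2 \in S'$ (which lie along the supporting lines $\ell_1, \ell_2$ of $r_1, r_2$, between the ground line $L$ and the origins $o_1, o_2$) are strictly independent. The key geometric quantity for the whole argument is $p := \ell_1 \cap \ell_2$, the intersection point of the supporting lines (or its nonexistence if the lines are parallel); by the choice of $L$, whenever $p$ exists it lies strictly above $L$. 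Throughout, I will use that $s'_i$ and the extension of $s'_i$ past $o_i$ together reconstitute the full line $\ell_i$, and the extension past $o_i$ is exactly $r_i$.

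For the forward direction, suppose $r_1$ and $r_2$ intersect. Then $p$ exists and lies on both rays, so $p$ is above $o_1$ on $\ell_1$ and above $o_2$ on $\ell_2$; in particular $p \notin s'_1$ and $p \notin s'_2$. Since $s'_1 \cap s'_2 \subseteq \ell_1 \cap \ell_2 = \{p\}$, condition $(C_1)$ holds. Extending $s'_1$ past $o_1$ yields $r_1$, which meets $\ell_2$ only at $p \notin s'_2$, so this extension does not cross $s'_2$; by symmetry the same holds for extending $s'_2$ past $o_2$. Finally, extending both gives $r_1$ and $r_2$, which do intersect at $p$. Thus $(C_2)$ holds and the pair is strictly independent.

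For the converse I argue contrapositively: assume $r_1, r_2$ do not intersect, and show strict independence fails. If $\ell_1 \parallel \ell_2$, no extension of either $s'_i$ ever meets the other line, so extending both still produces no intersection, violating the second clause of $(C_2)$. Otherwise $p$ exists, and non-intersection of the rays means $p$ fails to lie on at least one of them; without loss of generality assume $p$ lies on $\ell_1$ strictly below $o_1$, so $p \in s'_1$. Now split on $o_2$: if $p$ also lies below $o_2$ then $p \in s'_2$ as well, so $s'_1 \cap s'_2 \ni p$ and $(C_1)$ fails; otherwise $p$ lies strictly above $o_2$, so $p$ belongs to the extension of $s'_2$ past $o_2$ and this extension meets $s'_1$ at $p$, violating the first clause of $(C_2)$. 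Either way strict independence fails, completing the equivalence.

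The only mildly delicate point is the borderline case in which $p$ lies strictly between $o_1$ and $o_2$ along the two lines: here the two grounded segments themselves fail to share a point, yet the pair must still be declared not strictly independent, and the argument above handles this precisely because one of the single-segment extensions reaches $p$ while the unextended other segment already contains $p$. Assuming segment endpoints are in general position (as the section stipulates), the degenerate possibilities $p = o_i$ or coincident supporting lines do not arise, so no further case analysis is needed.
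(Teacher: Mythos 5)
Your proof is correct and follows essentially the same route as the paper: both pass through the ray representation $U(S)$ and exploit the fact that extending the $S'$ segments beyond their free endpoints reproduces exactly the rays, so strict independence of $s'_1,s'_2$ corresponds to intersection of $r_1,r_2$. You simply spell out, via the supporting-line intersection point $p$ and a short case analysis, the geometric details that the paper compresses into "by construction."
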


\subsubsection{Dynamic Program to Compute a Strict Independent Set of $S'$}
\label{sec:dp}
Assume that the lines $\ell_1,\ldots ,\ell_k$ cover all the free endpoints of the segments in $S'$.
We define the \emph{height of a segment} $s$ to be the $y$-coordinate of its free endpoint and denote the height by $h(s)$. Let $\sigma[1\ldots n]$ be a  sequence of segments obtained based on their order on the ground line. We use the terms left and right to describe the relative order of segments in $\sigma$.  
The idea of the algorithm is to order the segments in non-decreasing order of heights. The algorithm will use this order to guess the segments in the optimal solution. Assume that a set $I$ of strictly independent segments has already been selected. To extend this set while maintaining strict independence, the algorithm generates subproblems by guessing the next tallest segment to include and takes the maximum over all such choices. Another key  component of the algorithm is a compact problem encoding. Instead of retaining all the segments of $I$, we will maintain a set of at most $4k$ segments to pass to each  subproblem.

We first introduce some notation and preliminary observations. 
Let $s_i,s_j$ be two segments where $s_i$ is to the left of $s_j$ in $\sigma$. Let $T$ be a strict independent set of at most $4k$ segments, where at most $2k$  lie in $\sigma[1\ldots i]$ and at most $2k$  lie in $\sigma[j\ldots n]$.  By  $Q[s_i ,s_j, T]$ we denote the maximum number of segments that can be added from $\sigma[i+1\ldots j-1]$, where no segment is higher than both $s_i,s_j$, and together they form a strict independent set with $T$ (Figure~\ref{fig:dp}(a)). 

\begin{figure}
    \centering
    \includegraphics[width=\linewidth]{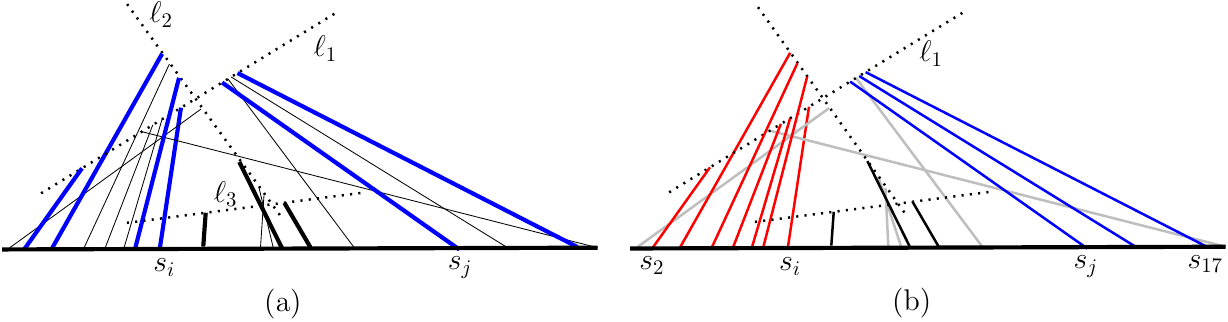}
    \caption{(a) A set of grounded segments with free endpoints covered by 3 lines (shown in dots). A set $T$ is shown in thick blue, and the segments corresponding to  $Q[s_i,s_j, T]$ are shown in thick black. (b) Illustration for $\overline{A'}_\ell,\overline{A'}_r$ and $A'$ in red, blue, and black, respectively.  }
    \label{fig:dp}
\end{figure}

Let $A$ be a maximum strict independent set in $S'$. Let $s_i$ and $s_j$ be two segments in $A$ such that no segment of $A$ between $s_i$ and $s_j$ is higher than both $s_i,s_j$. Let $A'$ be the set of these segments. Let $\overline{A'}_\ell$ and $\overline{A'}_r$ be the segments  of $(A\setminus A')$ that lie to the left and right of $A'$, respectively (Figure~\ref{fig:dp}(b)). 
We now construct a strict independent set $T$ as follows. For each $w$ from $1$ to $k$, we fill the entries  $T[4w-3\ldots 4w]$ from segments of $(A\setminus A')$ that have free endpoints on $\ell_w$. Specifically, the first two entries  $T[4w-3]$ and $T[4w-2]$ are set to the highest and lowest segments of $\overline{A'}_\ell$ that have free endpoints on $\ell_w$. Similarly, $T[4w-1]$ and $T[4w]$ are set to the highest and lowest segments of $\overline{A'}_r$ that have free endpoints on $\ell_w$. If we do not find enough segments to assign, then some of these entries remain null. It is straightforward to observe by the construction of $T$ that $s_i$ and $s_j$ belong to $T$. For example, in Figure~\ref{fig:dp}(b), the entries $T[1,\ldots,4]$ are set to $s_i,s_2,s_{17},s_j$, respectively.  

The following lemma  justifies the  problem  encoding, whose proof is   in Appendix~\ref{app:dp}.

\begin{lemma}\label{encoding}
Let $A$ be a maximum strict independent set in $S'$, and let $s_i$ and $s_j$ be two segments in $A$. Let $A'$ be the set of the segments of $A$ that lie between $s_i$ and $s_j$. Assume that no segment of $A'$ is higher than both $s_i,s_j$. Let $Z$ be the segments corresponding to $Q[s_i ,s_j, T]$. Then $Z\cup (A\setminus A')$ is a 
 maximum strict independent set of $S'$. 
\end{lemma}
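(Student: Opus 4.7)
The plan is to verify that $|Z\cup(A\setminus A')|\ge|A|$ and that $Z\cup(A\setminus A')$ is a strict independent set. Combined with the maximality of $A$, these will give that $Z\cup(A\setminus A')$ is also a maximum strict independent set.

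For the size bound, I would first check that $A'$ itself is a feasible candidate for the subproblem $Q[s_i,s_j,T]$: its segments lie in $\sigma[i+1\ldots j-1]$ by definition of $A'$, none exceeds both $s_i$ and $s_j$ in height (hypothesis), and $A'\cup T\subseteq A$ is strict independent. Hence $|Z|\ge|A'|$. Since $Z\subseteq\sigma[i+1\ldots j-1]$ while $A\setminus A'\subseteq\sigma[1\ldots i]\cup\sigma[j\ldots n]$, the two sets are disjoint, and $|Z\cup(A\setminus A')|=|Z|+|A\setminus A'|\ge|A'|+|A\setminus A'|=|A|$.

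For strict independence of $Z\cup(A\setminus A')$, pairs inside $A\setminus A'$ are inherited from $A$, and pairs inside $Z\cup T$ come from the definition of $Q$. The only non-trivial case is a pair $(z,s')$ with $z\in Z$ and $s'\in(A\setminus A')\setminus T$. Assume WLOG $s'\in\overline{A'}_\ell$, and let $\ell_w$ be the line through the free endpoint of $s'$. By construction $T$ contains the highest and lowest members $s'_{hi}$ and $s'_{lo}$ of $\overline{A'}_\ell$ whose free endpoints lie on $\ell_w$, so $h(s'_{lo})\le h(s')\le h(s'_{hi})$, and $z$ is strictly independent with both $s'_{hi}$ and $s'_{lo}$. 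Translating into the upward-ray representation from Lemma~\ref{lem:equiv}, strict independence becomes ray intersection, and the task reduces to the following sandwich claim: if the ray $r_z$ meets both $r_{s'_{hi}}$ and $r_{s'_{lo}}$ (whose origins bracket that of $r_{s'}$ on $\ell_w$), then $r_z$ also meets $r_{s'}$. A symmetric argument then handles $s'\in\overline{A'}_r$.

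The main obstacle will be this sandwich claim. I would prove it by a case analysis on the slope of $\ell_w$ and the ground-line positions of $s'_{lo},s',s'_{hi},z$, showing that along the line carrying $r_z$ the intersection points with the lines of $s'_{lo},s',s'_{hi}$ occur in a monotone order, so the intersection point with the line of $s'$ lies above both the origin of $r_z$ and the free endpoint of $s'$. This monotonicity is precisely what makes the extremal choice of $s'_{hi},s'_{lo}$ on each covering line sufficient, and hence justifies the compact $O(k)$-sized encoding $T$ of the dynamic program.
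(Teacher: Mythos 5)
Your size argument and your identification of the single non-trivial case (a pair $z\in Z$, $s'\in(A\setminus A')\setminus T$) match the paper. The gap is in the ``sandwich claim'' that you yourself flag as the main obstacle: it is false as you state it. Take $\ell_w$ vertical with ray origins $q_{lo}=(0,10)$, $q_{s'}=(0,11)$, $q_{hi}=(0,12)$ and upward rays of slopes $3$, $1/2$ and $1/10$, respectively; these pairwise intersect and the origins are bracketed as required. The ray $r_z$ with origin $(0.2,10.8)$ and slope $1/2$ meets $r_{lo}$ and $r_{hi}$ but is parallel to $r_{s'}$ and never meets it. So pairwise intersection plus bracketing of the origins on $\ell_w$ does not imply the conclusion, and no case analysis on slopes can rescue a false statement. (In this example $z$'s grounded endpoint falls between those of $s'$ and $s'_{lo}$, which the actual DP forbids --- but that is exactly the positional information your reduction discards.)

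The paper's proof does not argue from the extremal pair alone. It splits on which condition of strict independence fails: when $z$ and $s'$ intersect as segments, or when the extension of $s'$ hits $z$, the contradiction comes from the segment $s_i\in T$, which lies on the ground line between $\overline{A'}_\ell$ and $Z$ and is strictly independent from both $z$ and $s'$, combined with the cap that no segment of $Z$ is higher than both $s_i$ and $s_j$; only the remaining case (the extension of $z$ hits $s'$) is settled by the highest/lowest pair on $\ell_w$. Your plan never invokes $s_i$ or the height cap, so at a minimum you must add both as hypotheses of your claim before any monotonicity argument can go through.
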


\begin{figure}[h]
    \centering
    \includegraphics[width=\linewidth]{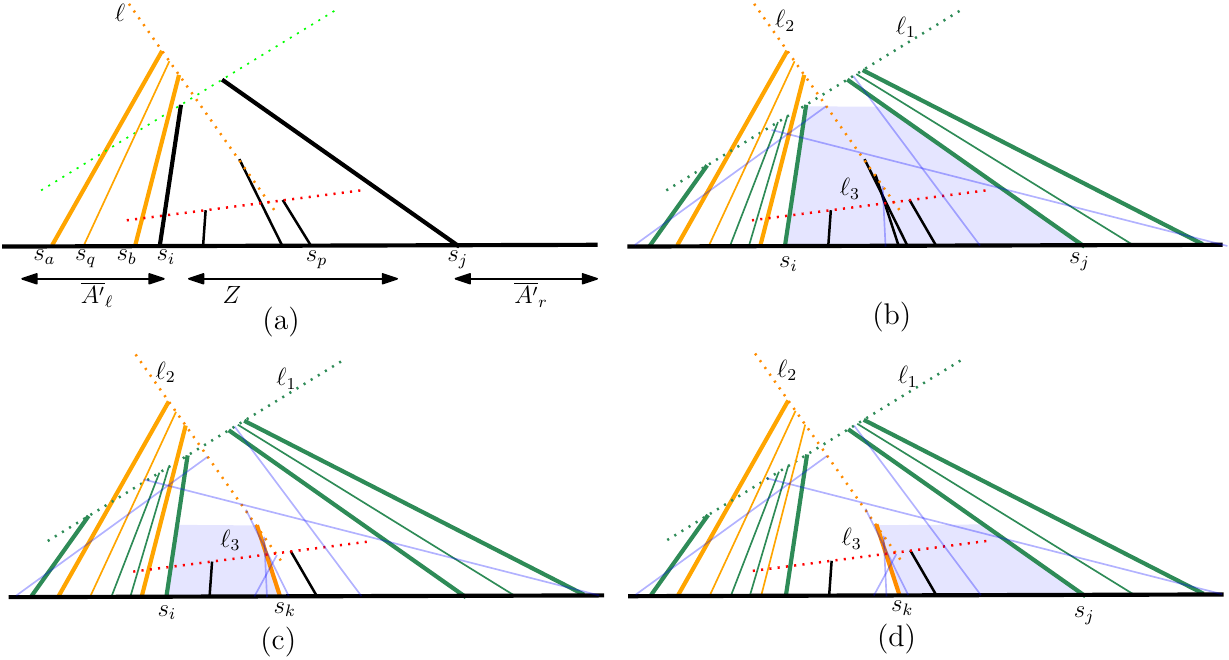}
\caption{(a) Illustration for Lemma~\ref{encoding}. 
 (b) A subproblem  $Q[s_i,s_j,T]$, where segments that have already been taken in the solution are shown in green or orange  depending on whether they are on $\ell_1$ or $\ell_2$, respectively. The segments that are not strictly independent with the selected segments are in thin blue. The segments in $T$ are shown in thick orange or green lines. The black segments are candidates for $s_k$. (c)--(d) A decomposition of $Q[s_i,s_j,T]$ into $Q[s_i,s_k,T_1]$ and $Q[s_k,s_j,T_2]$. }
    \label{fig:dp1}
\end{figure}

By Lemma~\ref{encoding}, instead of tracking the entire set  $(A\setminus A')$ that has already been taken in the solution, we can encode the subproblems  with respect to a subset $T$ of $(A\setminus A')$, i.e., $Q[s_i ,s_j, T]$. We now  formulate a recurrence relation for $Q[s_i ,s_j, T]$. For convenience of presentation, we insert two dummy segments in $S'$. One  to the left of $\sigma[1]$ and the other to the right of $\sigma[n]$ such that the following conditions hold for the newly added segments. (a) They are strictly independent both among themselves and with each segment in $S'$. (b) They are higher than the segments of $S'$. (c) Their free  endpoints lie on a horizontal line $\ell_{k+1}$ which is different than $\ell_1,\ldots,\ell_k$. 

The dummy segments $\sigma[0]$ and $\sigma[n+1]$ can be constructed by first computing a rectangle $abcd$ that encloses all the segments of $S'$ and intersects the extended parts of all the segments in $S'$ at its top side $ab$. The segments $\sigma[0]$ (resp., $\sigma[n+1]$) are then constructed with a positive (resp., negative) slope with $a$ (resp. $b$) as its free endpoint. 
The size of the maximum strict independent set in $S'$ is now $Q[s_0 ,s_{n+1}, T]$, where $T = \{s_0,s_{n+1}\}$. For simplicity, we write the elements of $T$ without showing the null entries.  The following lemma gives a  problem decomposition (Figures~\ref{fig:dp1}(b)--(d)) whose proof is   in Appendix~\ref{app:dp}.

\begin{lemma}\label{recurrence}
    \begin{align*}
    Q[s_i ,s_j, T] = 
    \begin{cases}
        0 & \text{; if no segment between $s_i,s_j$ has}\\&\text{height at most $\min \{h(s_i),h(s_j)\}$   } \\
        & \text{and is strictly  independent to $T$.}   \\
\max\limits_{s_k}\{Q[s_i ,s_k, T_1] + Q[s_k ,s_j, T_2] + 1\}&  \text{; over   $s_k$ between $s_i,s_j$ where}\\ & \text{$h(s_k){\le} \min \{h(s_i),h(s_j)\}$ and $s_k$ } \\
        & \text{is strictly independent with $T$.}   
    \end{cases}
\end{align*}
Let $\ell$ be the line that contains the free endpoint of $s_k$. Let $T^-$ ($T^+$) be the entries of $T$ that lie to the left (right) of $s_i$. If $T^-$ contains two segments $s,s'$ with free endpoints on $\ell$, then $T_1$ is obtained by adding $s_k$ and deleting the segment with median position among $s,s',s_k$. Otherwise, $T_1$ is obtained by adding $s_k$ to $T$. The construction of $T_2$ is symmetric with respect to $T^+$.  
\end{lemma}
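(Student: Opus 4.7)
The plan is to prove the recurrence by separately establishing both inequalities, after handling the base case, which is immediate from the definition of $Q$: if no segment between $s_i$ and $s_j$ is both below each of $s_i,s_j$ in height and strictly independent with $T$, then no non-empty candidate exists, so $Q[s_i,s_j,T]=0$.

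For the upper bound $Q[s_i,s_j,T]\le\max_{s_k}\{Q[s_i,s_k,T_1]+Q[s_k,s_j,T_2]+1\}$, I would take an optimal realization $Z^{*}$ of $Q[s_i,s_j,T]$, assume it is non-empty, and let $s_{k^{*}}$ be the tallest segment in $Z^{*}$. The definition of $Q$ then forces $h(s_{k^{*}})\le\min\{h(s_i),h(s_j)\}$, and strict independence of $Z^{*}\cup T$ forces $s_{k^{*}}$ to be strictly independent with $T$, so $s_{k^{*}}$ is a valid candidate in the max. Splitting $Z^{*}\setminus\{s_{k^{*}}\}$ by position into $Z^{*}_L$ and $Z^{*}_R$, each piece has height at most $h(s_{k^{*}})$ by maximality and is strictly independent with $T\cup\{s_{k^{*}}\}$, and therefore with the encoded $T_1^{*},T_2^{*}$ (both contained in $T\cup\{s_{k^{*}}\}$). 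Hence $|Z^{*}_L|\le Q[s_i,s_{k^{*}},T_1^{*}]$ and symmetrically $|Z^{*}_R|\le Q[s_{k^{*}},s_j,T_2^{*}]$, which sum to the desired bound.

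For the lower bound, I would fix any valid $s_k$ and optimal $Z_1,Z_2$ for the two subproblems, and argue that $Z:=Z_1\cup\{s_k\}\cup Z_2$ is feasible for $Q[s_i,s_j,T]$. The height constraint on each segment of $Z$ propagates immediately from the subproblem bounds together with $h(s_k)\le\min\{h(s_i),h(s_j)\}$. The delicate point is strict independence of $Z\cup T$: internal strict independence of $Z_1$ and $Z_2$, combined with $s_k\in T_1\cap T_2$, settles all pairs except the cross-split pairs $(z_1,z_2)$ with $z_1\in Z_1, z_2\in Z_2$, and the pairs between $Z$ and segments of $T$ not kept in $T_1$ or $T_2$. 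Both of these cases are handled by the structural property underlying Lemma~\ref{encoding}: for any line $\ell$ and any side, strict independence with the highest and lowest segments on $\ell$ on that side certifies strict independence with every segment on $\ell$ between them. Because the update rule keeps both extremes and discards only the median of $\{s,s',s_k\}$ whenever three segments on the same line would otherwise crowd one side of $T_1$ (and symmetrically for $T_2$), this property transfers strict independence verified against the encoded $T_1,T_2$ to strict independence against the larger sets we actually care about.

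I expect the main obstacle to be establishing this extremal-certificate property for segments in $S'$: showing that on any common line $\ell$, the set of segments strictly independent with a fixed segment forms a contiguous interval in the ordering along $\ell$, so that the two extremes suffice as witnesses. This monotonicity arises from the geometry of strict independence applied to segments of $S'$, which were produced by truncating an upward ray arrangement to a ground line, and once it is in hand all the bookkeeping above goes through cleanly, simultaneously justifying the encoding used in Lemma~\ref{encoding} and the validity of the present recurrence.
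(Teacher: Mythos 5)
Your overall skeleton matches the paper's: the base case, the upper bound obtained by splitting an optimal realization $Z^{*}$ at its tallest segment $s_{k^{*}}$, and the lower bound obtained by gluing optimal subproblem solutions around a candidate $s_k$. The upper-bound half is sound and is essentially the paper's argument that $Z_\ell$ is feasible for $Q[s_i,s_k,T_1]$. The gap is in the lower bound, and specifically in the tool you propose for the cross-split pairs $(z_1,z_2)$ with $z_1\in Z_1$ and $z_2\in Z_2$. The per-line extremal-certificate property you invoke (strict independence with the highest and lowest segments on a line $\ell$ on one side certifies strict independence with everything on $\ell$ between them) cannot apply to these pairs: $z_1$ and $z_2$ are arbitrary segments of the two subproblem solutions, they need not have free endpoints on a common tracked line, and neither appears in the other's table $T_1$ or $T_2$, so there are no extremal witnesses to appeal to. What closes this case in the paper is a different, direct observation: $z_1$ lies to the left of $s_k$ and $z_2$ to its right, both are strictly independent with $s_k$ (since $s_k\in T_1\cap T_2$), and neither is higher than $s_k$; from these facts alone the paper concludes that $z_1$ and $z_2$ are strictly independent (under the translation of Lemma~\ref{lem:equiv} this is an umbrella-type property of the underlying upward rays). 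That observation is the substantive content of the recurrence and is missing from your proposal.

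Second, even for the pairs that an extremal-certificate argument is suited to — a subproblem segment against the one entry of $T$ that may be discarded when forming $T_1$ or $T_2$ — you explicitly leave the certificate property unproven and flag it as ``the main obstacle.'' Note that it is not a generic interval/monotonicity statement about a single line: the corresponding argument in the proof of Lemma~\ref{encoding} relies on the two witnesses and the intermediate segment being mutually strictly independent and uses their relative heights and the position of $s_i$, so it would have to be re-derived in the setting of the recurrence rather than quoted. As written, the proposal correctly locates both delicate points but supplies the resolving idea for neither.
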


Since there are $O(n)$ possibilities for each of $s_i$ and $s_j$, and $O(n^{4k})$ possibilities for $T$, the number of subproblems is  $O(n^{4k+2})$. By Lemma~\ref{recurrence}, each subproblem can be computed by $O(n)$ table lookup. If $k=O(1)$, the overall running time to compute the solution to all subproblems is $O(n^{O(1)})$. We now use Lemma~\ref{lem:equiv} to obtain the following theorem. 

\begin{theorem}\label{thm:fixedlines}
    Given a set of $n$ grounded segments whose free endpoints lie on $O(1)$ lines, a maximum clique in the corresponding intersection graph can be computed in $O(n^{O(1)})$ time. 
\end{theorem}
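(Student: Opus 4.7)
The plan is to assemble the pieces already developed in Section~\ref{sec:fixedlineset} into a polynomial-time algorithm. First I would apply the transformation of Section~\ref{sec:trans} to the input set $S$ in linear time, producing the set $S'$ of grounded segments. By Lemma~\ref{lem:equiv}, this reduces the maximum clique problem on $G(S)$ to computing a maximum strict independent set in $G(S')$. A small observation needed here is that the free endpoints of the segments in $S'$ still lie on $O(1)$ lines: the transformation of \cite{DBLP:journals/jgaa/CardinalFMTV18} maps each grounded segment of $S$ to an upward ray whose support line is governed by the support line of the original segment, and the segments of $S'$ lie along these same lines, so the parameter $k$ controlling the dynamic program remains constant.

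Next I would augment $S'$ with the two dummy segments $s_0$ and $s_{n+1}$ described just before Lemma~\ref{recurrence}, arrange everything into the ordering $\sigma[0\ldots n+1]$, and evaluate the table entries $Q[s_i, s_j, T]$ in order of increasing span $j-i$. The final answer is recovered from $Q[s_0, s_{n+1}, \{s_0, s_{n+1}\}]$ (subtracting the two dummies), and equals the size of a maximum strict independent set in $S'$, which by Lemma~\ref{lem:equiv} equals the size of a maximum clique in $G(S)$. Correctness of the recursive step is immediate from Lemma~\ref{recurrence}, while the fact that storing only up to $4k$ segments in $T$ preserves at least one optimum is precisely the content of Lemma~\ref{encoding}.

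For the running time, there are $O(n^2)$ ordered pairs $(s_i, s_j)$ and $O(n^{4k})$ choices of $T$ (since $|T| \le 4k$), giving $O(n^{4k+2})$ subproblems in total. Each subproblem is evaluated by enumerating the $O(n)$ candidate splits $s_k$ and performing an $O(k)$ check of strict independence between $s_k$ and $T$, together with an $O(k)$ update that derives $T_1$ and $T_2$ from $T$ as described in Lemma~\ref{recurrence}. With $k = O(1)$, each subproblem costs $O(n)$ time, so the whole table is filled in $O(n^{4k+3}) = O(n^{O(1)})$ time, which matches the claimed bound.

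The main obstacle I anticipate is not the running-time analysis but the careful verification of correctness at two points: first, that the transformation in Section~\ref{sec:trans} indeed preserves the $O(1)$-lines property of the free endpoints; and second, that the substitution rule in Lemma~\ref{recurrence}, which discards the median of $\{s, s', s_k\}$ on line $\ell$ when forming $T_1$ and $T_2$, really retains enough information to realize a global optimum. Both are addressed by the lemmas already established, so once these are in hand the theorem follows by wrapping the transformation, the dummy-segment initialization, and the dynamic program into a single procedure.
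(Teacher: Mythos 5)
Your proposal matches the paper's own proof: it assembles the transformation of Section~\ref{sec:trans} with Lemma~\ref{lem:equiv}, the compact encoding of Lemma~\ref{encoding}, and the recurrence of Lemma~\ref{recurrence}, and arrives at the same $O(n^{4k+2})$ subproblem count with $O(n)$ work each. Your extra remark that the transformation preserves the $O(1)$-lines property of the free endpoints (since the line-to-line map of~\cite{DBLP:journals/jgaa/CardinalFMTV18} sends the $O(1)$ covering lines to $O(1)$ covering lines, not necessarily horizontal) is a point the paper leaves implicit but is correct and worth stating; note only that $Q[s_0,s_{n+1},\{s_0,s_{n+1}\}]$ already excludes the dummies by definition, so no subtraction is needed.
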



\subsection{Unit-Length Grounded Segments}\label{sec:unit}
We now give a polynomial-time algorithm to find a maximum clique in a grounded unit-length segment graph $G(S)$. Let $M$ be a maximum clique of $G(S)$, the idea is to define a set of subgraphs $\mathcal{G}$ of polynomial size such that at least one of them is guaranteed to contain the maximum clique $M$. The search for $M$ is then confined to these subgraphs. To ensure the feasibility of this approach, we also need to prove that each subgraph has a nice property that allows for finding its maximum clique in polynomial time. 

The construction of $\mathcal{G}$ is simple. For a pair $s_a$, $s_b$ of intersecting grounded segments, let $G(s_a,s_b)$ be the intersection graph determined by all the segments that intersect both $s_a$ and $s_b$, and lie between $s_a$ and $s_b$ on the ground line. We now define $\mathcal{G}$ to be the set of graphs $\bigcup_{s_a,s_b\in S} G(s_a,s_b)$. Let $M$ be a maximum clique of $G(S)$ and let $s_\ell$ and $s_r$ be the leftmost and rightmost segments of $M$, then the clique must exist in the graph $G(s_\ell,s_r)$. We will show that $G(s_\ell,s_r)$ is a cocomparability graph, and hence a maximum clique can be computed in polynomial time~\cite{gavril2000maximum}. 

A linear ordering $v_1\ldots,v_n$ of the vertices of a graph is called  \emph{an umbrella-free ordering} if for every $i,j,k$ with $1\le i < j < k\le n$, the existence of the edge  $(v_i,v_k)$ implies that  $v_j$ is  adjacent to at least one of $v_i$ or $v_k$. We will use the following  characterization of cocomparability graph. 

\begin{lemma}[Kratsch and Stewart~\cite{kratsch1993domination}]
\label{lem:umbrella}
A graph is a cocomparability graph if and only if it has an \emph{umbrella-free ordering}.   
\end{lemma}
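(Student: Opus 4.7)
The plan is to establish the characterization by showing that umbrella-free orderings of $G$ are, in a precise sense, the same objects as transitive orientations of the complement $\overline{G}$. Since $G$ is a cocomparability graph iff $\overline{G}$ admits a transitive orientation (equivalently, iff $\overline{G}$ is a comparability graph), the proof will route through linear extensions of the associated partial order.

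For the $(\Rightarrow)$ direction, I would take a transitive orientation of $\overline{G}$ and let $P$ denote the resulting partial order on the vertex set. Any linear extension $v_1,\ldots,v_n$ of $P$ will be my candidate umbrella-free ordering. I would verify the property by contrapositive: assume indices $i<j<k$ with $v_j$ non-adjacent in $G$ to both $v_i$ and $v_k$. Then $(v_i,v_j),(v_j,v_k) \in E(\overline{G})$, and because the linear extension refines $P$, they are oriented $v_i\to v_j$ and $v_j\to v_k$. Transitivity of the orientation forces $v_i\to v_k$ in $\overline{G}$, which means $(v_i,v_k) \notin E(G)$, as required.

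For the $(\Leftarrow)$ direction, starting from an umbrella-free ordering $v_1,\ldots,v_n$, I would define the natural orientation of $\overline{G}$ that sends each edge from its lower-indexed endpoint to its higher-indexed one. This orientation is immediately acyclic, so the only remaining task is transitivity. Given a two-step forward path $v_a\to v_b\to v_c$ in $\overline{G}$ with $a<b<c$, the vertex $v_b$ is non-adjacent in $G$ to both $v_a$ and $v_c$, and the umbrella-free property (in its contrapositive form) forces $(v_a,v_c) \notin E(G)$. Hence $(v_a,v_c) \in E(\overline{G})$ and is oriented $v_a\to v_c$, closing the triangle.

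The main obstacle is conceptual rather than technical: one needs to recognize that the umbrella-free condition is the precise syntactic translation of transitivity for the canonical forward orientation of $\overline{G}$. A missing middle at $v_j$ on an edge $(v_i,v_k) \in E(G)$ is exactly a broken transitivity $v_i\to v_j\to v_k$ with the closing edge $(v_i,v_k)$ absent from $\overline{G}$. Once this equivalence is pinpointed, neither direction requires any machinery beyond the existence of a linear extension of a finite partial order.
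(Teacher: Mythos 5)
Your argument is correct: both directions are sound, the key observation that a linear extension orders comparable elements consistently with the partial order is used properly in the forward direction, and the contrapositive reading of the umbrella-free condition does exactly the work of transitivity for the index-increasing orientation of $\overline{G}$ in the reverse direction. Note, however, that the paper does not prove this lemma at all—it is imported as a known characterization from Kratsch and Stewart—so there is no in-paper argument to compare against; what you have written is essentially the standard textbook proof that umbrella-free (cocomparability) orderings of $G$ correspond to linear extensions of transitive orientations of $\overline{G}$, and it is a perfectly adequate self-contained justification of the cited statement.
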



\begin{lemma}\label{slide}
The graph $G(s_\ell,s_r)$ is a cocomparability graph.
\end{lemma}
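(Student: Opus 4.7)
I plan to apply Lemma~\ref{lem:umbrella} by constructing an umbrella-free linear ordering of $V(G(s_\ell, s_r))$. For each interior segment $s\in G(s_\ell,s_r)\setminus\{s_\ell,s_r\}$ let $p_\ell(s)=s\cap s_\ell$ and let $u(s)$ be the parameter value of $p_\ell(s)$ along $s_\ell$ (measured from the foot to the free end). I would order the interior segments by increasing $u$, placing $s_\ell$ first and $s_r$ last. Since $s_\ell$ and $s_r$ are adjacent to every other vertex of $G(s_\ell,s_r)$, their placement cannot create an umbrella; it therefore suffices to check the umbrella-free condition on triples of interior segments.

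Fix three interior segments $s_1,s_2,s_3$ with $u(s_1)<u(s_2)<u(s_3)$ and suppose $s_1\cap s_3\ne\emptyset$. Write $q=s_1\cap s_3$ and consider the triangular region $\Delta$ whose boundary consists of the subarc of $s_\ell$ from $p_\ell(s_1)$ to $p_\ell(s_3)$, the subarc of $s_1$ from $p_\ell(s_1)$ to $q$, and the subarc of $s_3$ from $q$ to $p_\ell(s_3)$. Because $p_\ell(s_2)$ lies on the $s_\ell$-edge of $\Delta$, traversing $s_2$ from its foot to its free endpoint enters $\Delta$ at $p_\ell(s_2)$. Since $s_2$ is a straight segment, it crosses the line carrying $s_\ell$ at most once, so the only way $s_2$ can leave $\Delta$ is through the $s_1$- or $s_3$-edge, which would immediately yield the required adjacency.

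The remaining and main step is to show that $s_2$ does actually leave $\Delta$; equivalently, that the free endpoint of $s_2$ lies outside $\Delta$. I plan to use the fact that $s_2\cap s_r\ne\emptyset$ and argue that $s_r$ lies entirely outside $\Delta$. Indeed, the feet of $s_1,s_2,s_3$ lie between the feet of $s_\ell,s_r$ on the ground line, so $s_r$'s foot is to the right of $s_3$'s and outside $\Delta$; the unit-length hypothesis then prevents $s_r$ from reaching into $\Delta$ without first crossing $s_1$ or $s_3$. The delicate case is when $s_r\cap s_\ell$ falls inside the arc from $p_\ell(s_1)$ to $p_\ell(s_3)$, which I would exclude using the unit-length bound together with the ground-line ordering of the feet. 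A fallback, should this case analysis become too intricate, is to work with the pair $(u(s),v(s))$, where $v(s)$ is the parameter of $s\cap s_r$ along $s_r$, and prove directly that two interior segments are non-adjacent precisely when their $(u,v)$-pairs are componentwise comparable; this yields a partial order on the non-adjacency relation and witnesses cocomparability of $G(s_\ell,s_r)$ without going through Lemma~\ref{lem:umbrella}.
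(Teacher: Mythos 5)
Your skeleton (apply Lemma~\ref{lem:umbrella} to a linear order and trap the middle segment inside a triangle bounded by $s_\ell$, $s_1$ and $s_3$) is in the same spirit as the paper's proof, but the argument has a genuine gap exactly where the unit-length hypothesis must do the work. Everything reduces to your ``remaining and main step'' --- that the piece of $s_2$ entering $\Delta$ at $p_\ell(s_2)$ must leave $\Delta$ --- and your justification is that $s_2$ meets $s_r$ while $s_r$ lies entirely outside $\Delta$. You concede you cannot yet handle the case where $s_r\cap s_\ell$ lies on the $s_\ell$-edge of $\Delta$, and you give no argument for excluding it; nothing obviously forbids that configuration ($s_r$ does intersect both $s_1$ and $s_3$, so it may legitimately pass through $\Delta$). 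Even when $s_r\cap\Delta=\emptyset$, deducing that the free endpoint of $s_2$ is outside $\Delta$ needs the point $s_2\cap s_r$ to lie on the free-endpoint side of $p_\ell(s_2)$ along $s_2$, which you do not establish (if it lies on the foot side, you learn nothing about the piece inside $\Delta$). The fallback claim, that non-adjacency of interior segments is equivalent to componentwise comparability of $(u,v)$, is asserted without proof and is essentially the whole lemma. Since the clique problem is NP-hard for general grounded segments, any proof that never quantitatively invokes the unit-length bound should be treated with suspicion.

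For comparison, the paper closes this step quantitatively. It orders the segments by their feet on the ground line, splits on whether the intersection point $i(s_a,s_b)$ lies in the wedge $W$ above both lines through $s_\ell$ and $s_r$, and in that case applies Remark~\ref{rem:trivial} (the distance from any point of the base of a triangle to the apex is at most the longer side) to the triangle $\Delta s^g_a\, i(s_a,s_b)\, s^g_b$, or to $\Delta s^g_\ell\, i(s_\ell,s_b)\, s^g_b$, whose relevant sides have length at most one; this forces a middle segment avoiding both $s_a$ and $s_b$ yet reaching $s_\ell$ and $s_r$ to be strictly shorter than one unit, a contradiction. When $i(s_a,s_b)\notin W$, the trapping triangle lies below the line of $s_\ell$ (or of $s_r$), so $s_c$ cannot reach both $s_\ell$ and $s_r$. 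You would need a comparably concrete, length-based argument to discharge your main step.
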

\begin{proof}[Proof Sketch]
Let $\sigma=(s_\ell,\ldots,s_r)$ be the order of the segments of  $G(s_\ell ,s_r)$  on the ground line.  We   show that $\sigma$ is an umbrella-free ordering, and thus $G(s_\ell ,s_r)$ is a cocomparability graph by Lemma~\ref{lem:umbrella}. Specifically, for every subsequence $s_a,s_c,s_b$ in $\sigma$,  we   show that if $s_a$ and $s_b$ intersect, then at least one of them intersects $s_c$. We consider two cases based on the location of the intersection point $i(s_a,s_b)$ of $s_a$ and $s_b$. If $i(s_a,s_b)$ is above both the lines determined by $s_\ell$ and $s_r$ (Figures~\ref{fig:gr}(a)--(b)), then to avoid $s_a,s_b$, the length of $s_c$ must be less than one unit. If $i(s_a,s_b)$ is not above both lines (Figures~\ref{fig:gr}(c)--(d)), then to avoid  $s_a,s_b$, the segment $s_c$ must avoid at least one of $s_\ell$ and $s_r$. The details are in Appendix~\ref{app:cocom}. 
\end{proof}
\begin{figure}[h]
    \centering
    \includegraphics[width=.95\linewidth]{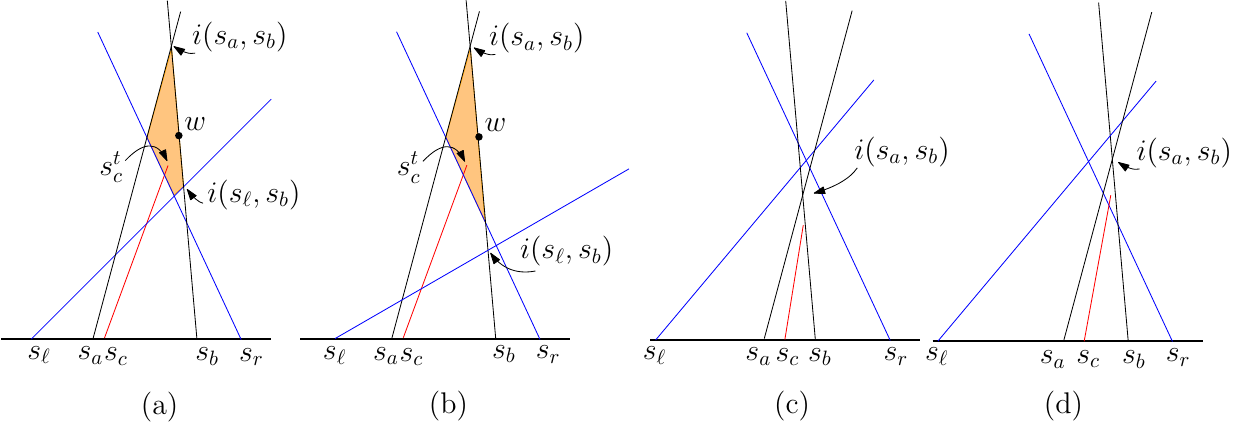}
    \caption{Illustration for the proof of Lemma~\ref{slide}. The  free endpoint $s^t_c$ of $s_c$ lies in orange region. 
    }
    \label{fig:gr}
\end{figure}

Since a maximum clique in an $n$-vertex cocomparability graph can be computed in polynomial time~\cite{gavril2000maximum}, we can use Lemma~\ref{slide} to obtain the following theorem.

\begin{theorem}\label{thm:unit}
Given a set of grounded unit-length segments, one can compute a  maximum clique in the corresponding intersection graph  in polynomial time.
\end{theorem}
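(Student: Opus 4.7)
The plan is to combine the subgraph family $\mathcal{G}$ described before the theorem with Lemma~\ref{slide} and the known polynomial-time algorithm for maximum clique in cocomparability graphs~\cite{gavril2000maximum}. Concretely, I would iterate over all ordered pairs $(s_a, s_b)$ of intersecting segments in $S$; for each such pair I would construct the subgraph $G(s_a, s_b)$ whose vertex set consists of all segments that intersect both $s_a$ and $s_b$ and lie between them on the ground line, and whose edges come from the intersection relation inherited from $G(S)$. I would then invoke Gavril's algorithm to compute a maximum clique $M(s_a, s_b)$ of $G(s_a, s_b)$, and finally output the largest clique found across all such pairs.

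For correctness, let $M$ be a maximum clique in $G(S)$ and let $s_\ell$ and $s_r$ be its leftmost and rightmost segments with respect to the order on the ground line. Since every segment of $M$ pairwise intersects $s_\ell$ and $s_r$ and lies (by choice of $s_\ell, s_r$) between them on the ground line, every vertex of $M$ belongs to $G(s_\ell, s_r)$, and the edges of $M$ are preserved because $G(s_\ell, s_r)$ inherits the intersection structure of $G(S)$. Hence $M$ is a clique of $G(s_\ell, s_r)$, and by Lemma~\ref{slide} this graph is a cocomparability graph, so Gavril's algorithm returns a clique of size at least $|M|$. Because the algorithm also never returns a clique larger than a maximum clique of $G(S)$ (any clique of a vertex-induced subgraph of $G(S)$ is a clique of $G(S)$), the maximum over all pairs is exactly the size of $M$.

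The running time is straightforward: there are $O(n^2)$ candidate pairs $(s_a, s_b)$, each subgraph $G(s_a, s_b)$ has at most $n$ vertices and can be extracted in $O(n^2)$ time by scanning the segments and testing intersections with $s_a, s_b$, and Gavril's algorithm runs in polynomial time on an $n$-vertex cocomparability graph. Thus the entire procedure terminates in polynomial time.

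The only genuinely non-trivial part is Lemma~\ref{slide} itself, which has already been established (modulo the case analysis deferred to the appendix); once we have it, the reduction to cocomparability max-clique is essentially mechanical. The one mild obstacle I would want to verify carefully is that restricting to segments that lie between $s_a$ and $s_b$ on the ground line does not accidentally exclude segments of $M$: this is immediate from the choice of $s_\ell, s_r$ as the leftmost and rightmost segments of $M$, but it is worth stating explicitly so that the reader sees why enumerating pairs suffices rather than, say, triples.
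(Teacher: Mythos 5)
Your proposal is correct and follows exactly the paper's approach: enumerate all pairs $(s_a,s_b)$, restrict to the induced subgraph $G(s_a,s_b)$ of segments intersecting both and lying between them on the ground line, apply Lemma~\ref{slide} to conclude this is a cocomparability graph, and run Gavril's algorithm, with correctness guaranteed by taking $s_\ell, s_r$ to be the leftmost and rightmost segments of a maximum clique. No substantive differences.
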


\subsection{Approximation for Grounded Segment Graphs}\label{sec:approx}

We now give a polynomial-time $O(n^{3/4})$-approximation algorithm to compute a maximum clique in an  arbitrary grounded segment graph.

\begin{theorem}\label{thm:approx}
 Given $n$ grounded segments, one can compute a clique $C$ in their intersection graph in polynomial time such that $|C|$ is $\Omega(n^{-3/4}|M|)$, where $M$ is a maximum clique.
 \end{theorem}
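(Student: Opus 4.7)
The plan is a case split on the unknown maximum clique size $k = |M|$. If $k \le n^{3/4}$, then returning any single grounded segment already yields $|C| = 1 \ge k/n^{3/4}$, meeting the approximation bound trivially. So the substantive work lies in the regime $k > n^{3/4}$, where the algorithm must produce a clique of size $\Omega(k/n^{3/4})$.

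For the main regime, I plan to combine a short enumeration of ``anchor'' segments with one of the polynomial-time subroutines already developed in this paper. Concretely, I would enumerate all pairwise intersecting triples $(s_a, s_b, s_c)$ of input segments---there are $O(n^3)$ of them---and for each triple inspect the common neighborhood $W = N(s_a) \cap N(s_b) \cap N(s_c)$ in the intersection graph $G(S)$, together with the three anchors. When the guess is correct, i.e., $\{s_a, s_b, s_c\} \subseteq M$, the set $W$ contains the clique $M \setminus \{s_a, s_b, s_c\}$ of size at least $k-3$. The key structural claim I expect to prove is that, once three pairwise intersecting grounded segments are fixed, any other grounded segment that crosses all three is geometrically confined: after harmlessly truncating above the anchors, its free endpoint can be forced to lie on one of $O(1)$ guiding lines determined by the three anchors and by the pairwise intersection points $s_a \cap s_b$, $s_b \cap s_c$, $s_a \cap s_c$. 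This would reduce the subproblem on $W$ to the tractable case covered by Theorem~\ref{thm:fixedlines} (or, in degenerate configurations, to the cocomparability argument of Section~\ref{sec:unit}), so a maximum clique in $W$ can be computed exactly in polynomial time and the best output over all $O(n^3)$ anchor guesses is returned.

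The main obstacle is this structural reduction for the common neighborhood $W$. Proving that fixing three pairwise intersecting anchors confines the rest of $M \cap W$ to a constant number of guiding lines will require a careful geometric case analysis distinguishing, e.g., how a segment can simultaneously cross three anchors of mixed slope directions, and showing that in each case its free endpoint is forced onto one of a constant list of anchor-determined lines. Calibrating the number of anchor segments to the quality of the downstream subroutine is what pins the exponent at $3/4$: three anchors are precisely what is needed to make the common neighborhood fall into a tractable subclass, while any smaller number of anchors leaves too much freedom for the exact subroutine to apply and any larger number blows up the enumeration beyond the desired approximation factor. I also foresee a secondary subtlety in handling triples that are not actually contained in $M$---these must be processed cheaply so that the $O(n^3)$ enumeration dominates only polynomially, and the final $\Omega(k/n^{3/4})$ bound follows by taking the best clique over all triples.
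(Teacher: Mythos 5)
Your reduction for the main regime contains a fatal flaw, and there is a quick sanity check that exposes it: if your structural claim were true, your algorithm would solve the problem \emph{exactly}. Indeed, when the guessed triple satisfies $\{s_a,s_b,s_c\}\subseteq M$, the common neighborhood $W$ contains all of $M\setminus\{s_a,s_b,s_c\}$, so computing a maximum clique in $G[W\cup\{s_a,s_b,s_c\}]$ exactly (via Theorem~\ref{thm:fixedlines}) would return a clique of size $|M|$. Since Theorem~\ref{thm:uprayhard} shows that maximum clique in grounded segment graphs is NP-hard, the claimed reduction to the $O(1)$-lines case cannot hold unless P $=$ NP. Concretely, the structural claim fails because fixing three pairwise intersecting anchors does not confine the free endpoints of segments crossing all three: such a segment's free endpoint can lie anywhere in a two-dimensional region above the anchors, not on $O(1)$ lines. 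The ``harmless truncation'' does not rescue this: truncating a segment at its highest crossing with an anchor places the new endpoint on one of three lines, but it also destroys intersections between pairs of non-anchor segments whose crossing point lies above the truncation heights, so the intersection graph is not preserved. Your calibration story for the exponent $3/4$ (``three anchors are precisely what is needed'') has no quantitative content and would not produce that exponent even if the reduction were sound.

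For comparison, the paper's proof takes a completely different route: it applies the Erd\H{o}s--Szekeres theorem twice to partition the $n$ segments into $O(n^{3/4})$ subsets --- first into $O(\sqrt{n})$ subsequences whose free endpoints are monotone in $y$ (each of size at most $\sqrt{n}$), then each of those into $O(n^{1/4})$ subsequences monotone in the grounded endpoints' $x$-coordinates. By pigeonhole some subset contains $\Omega(n^{-3/4}|M|)$ vertices of $M$, and each subset induces a cocomparability graph (via an umbrella-free ordering), so its maximum clique is computable in polynomial time. The exponent $3/4$ arises from this $\sqrt{n}\times n^{1/4}$ double partition, not from anchor enumeration. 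Your trivial case $k\le n^{3/4}$ is fine but does not salvage the argument.
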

\begin{proof}[Proof Sketch]  We partition the segments into $O(n^{3/4})$ subsets with specific property. Let $S$ be such a subset and let $\sigma$ denote an order of its segments  in increasing $x$-coordinates of the  free endpoints. We design  $S$ such that the corresponding sequence of free  (resp., grounded) endpoints is monotonically increasing or decreasing in $y$-coordinates (resp., $x$-coordinates). We show that $S$ is a cocomparability graph, which implies that a clique of size $\Omega(n^{-3/4}|M|)$ can be found by examining all subsets. The details are in Appendix~\ref{app:approx}.
\end{proof}


\section{Disk Intersection Graphs} 

In this section we present the results on disk intersection graphs.

\subsection{A Polynomial-Time Algorithm in Grounded Setting} \label{sec:gd}

\begin{theorem}\label{thm:polydisk}
    Given a set of grounded disks $\mathcal{D}$, a maximum clique in the corresponding disk intersection graph $G(\mathcal{D})$ can be computed in polynomial time.
\end{theorem}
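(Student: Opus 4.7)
The plan is to extend the leftmost/rightmost enumeration strategy of Section~\ref{sec:unit} from segments to disks. For each ordered pair $(D_\ell,D_r)$ of intersecting disks in $\mathcal D$, let $G(D_\ell,D_r)$ be the subgraph of $G(\mathcal D)$ induced by the disks whose touching points on the ground line lie in $[x_\ell,x_r]$ and that intersect both $D_\ell$ and $D_r$. Any maximum clique $M$ of $G(\mathcal D)$ coincides with a maximum clique of $G(D_\ell,D_r)$ when $D_\ell$ and $D_r$ are chosen as the disks of $M$ with extreme touching-point $x$-coordinates, so it suffices to compute a maximum clique in each of the $O(n^2)$ subproblems and return the best.

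The essential step is to compute a maximum clique in $G(D_\ell,D_r)$ in polynomial time. Mirroring the argument of Lemma~\ref{slide}, the first attempt is to exhibit an umbrella-free ordering of $G(D_\ell,D_r)$ and then invoke Gavril's algorithm~\cite{gavril2000maximum}. For unit-length segments, the touching-point order works; for disks it does not, because a very small middle disk can be disjoint from two large flanking disks that themselves intersect (the intersection condition $|x_i-x_j|\le 2\sqrt{r_ir_j}$ involves the geometric mean of the radii, not their sum). To rescue cocomparability I would exploit the extra constraint that every disk in $G(D_\ell,D_r)$ intersects both $D_\ell$ and $D_r$, which forces the lower bound $r_i\ge (x_r-x_\ell)^2/(4(\sqrt{r_\ell}+\sqrt{r_r})^2)$ on each radius and rules out arbitrarily tiny middle disks. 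Combining this bound with the intersection condition, one hopes to prove the umbrella-free property either under the touching-point order or under a derived ordering such as $x_i/\sqrt{r_i}$ or $x_i-2\sqrt{r_i r_\ell}$, or alternatively to argue that $G(D_\ell,D_r)$ belongs to a broader polynomial-time-tractable class (e.g.\ perfect graphs, via a proof that no odd hole can survive the $D_\ell$/$D_r$ intersection constraints).

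The main obstacle is exactly this structural step. Because the naive touching-point umbrella-free property does fail, identifying the correct ordering (or the correct tractable class) is the technical heart of the proof; the proof must leverage the radius lower bound together with the intersection inequality in a nontrivial way. A fallback plan, should the cocomparability route not succeed, is a direct dynamic program in the spirit of Section~\ref{sec:dp}: index subproblems by pairs $(D_i,D_j)$ together with a constant-sized summary of the partial clique (analogous to the $O(k)$-element encoding used there), recurse by guessing a middle disk, and bound the state space using the same radius-lower-bound observation. In either case the resulting running time is $n^{O(1)}$, which completes the proof.
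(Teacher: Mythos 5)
There is a genuine gap: your proposal is a search plan rather than a proof, and its central structural step is left open by your own admission. Worse, the most concrete route you offer --- umbrella-freeness of the touching-point order for $G(D_\ell,D_r)$, rescued by the radius lower bound $r_c\ge (x_r-x_\ell)^2/(4(\sqrt{r_\ell}+\sqrt{r_r})^2)$ --- provably fails. Take touching points and radii $(x,r)$ as follows: $D_\ell=(-0.06,1.44)$, $D_a=(0,1)$, $D_c=(0.9,0.16)$, $D_b=(1.8,1)$, $D_r=(1.86,1.44)$. Using the criterion $|x_i-x_j|\le 2\sqrt{r_ir_j}$, all five disks pairwise intersect except that $D_c$ misses both $D_a$ and $D_b$ ($0.9>2\sqrt{0.16}=0.8$); in particular every disk intersects both $D_\ell$ and $D_r$ and lies between them, so all five belong to $G(D_\ell,D_r)$, yet in touching-point order $D_a<D_c<D_b$ with $D_aD_b$ an edge and $D_c$ adjacent to neither. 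Note that $r_c=0.16$ exactly attains your lower bound $(1.92)^2/(4\cdot 2.4^2)=0.16$, so that bound cannot exclude such configurations (perturb slightly to make all intersections strict). Whether some other ordering, or membership in a tractable class, saves $G(D_\ell,D_r)$ is exactly the question you would still have to answer; the fallback dynamic program is likewise unspecified (no definition of the ``constant-sized summary'' and no argument that one suffices for disks).

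The paper takes a different and much cleaner decomposition that you never consider: instead of guessing the leftmost and rightmost members of the clique, it guesses the \emph{smallest} disk $D$ of the clique ($O(n)$ guesses), restricts to the disks intersecting $D$, and splits them into those whose centers lie to the left versus to the right of the center of $D$. A short sliding argument (translate the farther disk horizontally until its center is above $D$'s touching point; since $D$ is smallest and grounded, the translated disk contains $D$, and the translation only increases the relevant center distance) shows each side is a clique, so the subgraph is cobipartite and its maximum clique is computable in polynomial time. You should either adopt that decomposition or actually prove a cocomparability (or other tractability) statement for your $G(D_\ell,D_r)$; as written, the theorem is not established.
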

\begin{proof}[Proof Sketch]
For each disk $D\in\mathcal{D}$, we compute a maximum clique $M(D)$ that contains  $D$ as its smallest disk. A maximum clique of $G(\mathcal{D})$ is then obtained by taking the largest among these cliques. 
To compute $M(D)$, we can only examine the set of disks  $S\subseteq \mathcal{D}$ that intersect $D$. We next show that the intersection graph of $S$ is   a cobipartite graph. The details are in Appendix~\ref{app:gd}. 
\end{proof}

\subsection{(3/2)-Approximation for $[1,3]$-Disk Graphs} 

In this section we consider \emph{$[1,3]$-disks}, which are   disks with radii in the interval $[1,3]$. Let $\mathcal{D}$ be a set of $[1,3]$-disks.    We now show how to compute a 3/2-approximation in the corresponding  disk intersection graph in $O(n^3f(n))$ time, where $f(n)$ is the time to compute a  maximum clique in a $n$-vertex cobipartite graph. We will use the following observation that given a set of disks pierced by three points, one can compute a 3/2-approximation for the maximum clique in $O(f(n))$ time. The idea is to first construct three graphs, each formed by discarding the disks that are uniquely hit one of the three points, and then take the largest clique among the three graphs. To obtain the final approximation algorithm, we will show the search for the maximum clique 
can be reduced to $O(n^3)$ subsets of mutually intersecting disks.

\subsubsection{Piercing Pairwise Intersecting $[1,3]$-Disks}
\label{sec:piercing}
\noindent
{\bf Technical Background.} We first introduce some notation and preliminary results. We call three pairwise intersecting disks $D_1,D_2,D_3$ a \emph{Helly triple} if they have a common intersection point. Otherwise, we call them a \emph{non-Helly triple}.  The  \emph{inscribed circle} of a non-Helly triple is the largest circle that can be contained within the region bounded by the three disks of the triple (Figure~\ref{fig:lemmas}(a)). Given a line segment $ab$, we use the notation $L(ab)$ to denote the line determined by $ab$. For an angle $\angle pqr$, we consider the clockwise angle from $qp$ to $qr$. By $W(\angle pqr)$, we denote a wedge-shaped region  that is swept by a rotating ray $qp$ clockwise until it aligns with $qr$.
We will use  the following four lemmas, where the proof of the last three is included in Appendix~\ref{app:piercing}.
\begin{figure}[h]
    \centering
    \includegraphics[width=\linewidth]{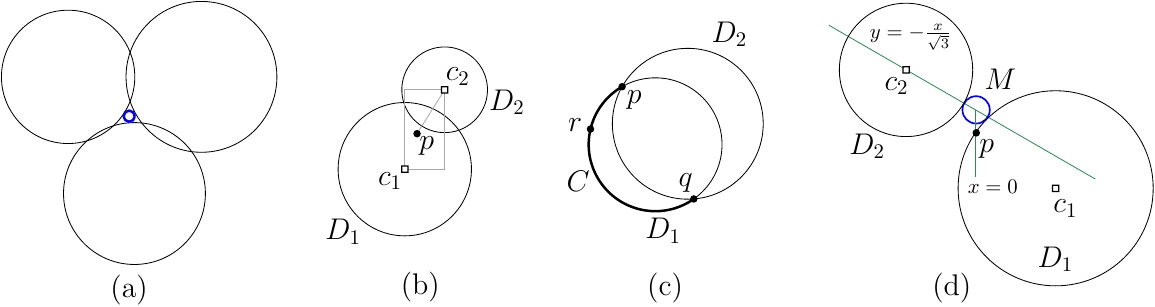}
    \caption{(a) A non-Helly triple and their inscribed circle are shown in black and blue, respectively. Illustration for (b) Lemma~\ref{lem:contain}, (c) Lemma~\ref{lem:arc}, and (d) Lemma~\ref{lem:120}. }
    \label{fig:lemmas}
\end{figure}
\begin{lemma}[Biniaz,  Bose and  Wang~\cite{biniaz2023simple}]\label{lem:hradi}
The inscribed circle of a non-Helly triple of three pairwise intersecting disks of radii $r $ has radius
at most $r(\frac{2}{\sqrt{3}}-1)$.
\end{lemma}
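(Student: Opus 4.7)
The plan is to bound the inscribed circle's radius $\rho$ by the circumradius $R$ of the triangle $T$ formed by the three disk centers $c_1,c_2,c_3$, and then bound $R$ using the pairwise intersection constraint. I would carry this out in three steps.

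First, I would argue that $T$ must be a non-obtuse triangle. Suppose for contradiction that $T$ is obtuse at $c_3$. Then $c_1c_2$ is the longest side, and $c_3$ lies inside the closed disk whose diameter is $c_1c_2$ (a standard characterization of obtuse angles by Thales). Let $m$ be the midpoint of $c_1c_2$. Then $|mc_1|=|mc_2|=|c_1c_2|/2\le r$ by pairwise intersection, and $|mc_3|\le |c_1c_2|/2\le r$ by the obtuse property. Hence $m$ lies in all three disks, contradicting the non-Helly hypothesis. So $T$ is acute (or right).

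Second, I would identify the center of the inscribed circle as the circumcenter of $T$. If $p$ is the inscribed circle's center and $\rho$ its radius, the constraint that the inscribed disk avoids each $D_i$ gives $|pc_i|\ge r+\rho$, so $\rho\le \min_i |pc_i|-r$. Maximizing $\rho$ therefore reduces to maximizing $g(p)=\min_i|pc_i|$ over the bounded hole. On the hole's boundary $g=r$, and the interior critical points of $g$ occur only where at least two $|pc_i|$ tie as the minimum. Walking along the perpendicular bisector of any two centers shows that $g$ strictly increases until the circumcenter of $T$ is reached, at which point all three distances equal $R$. Since $T$ is acute and non-Helly forces $R>r$, the circumcenter lies inside $T$ and outside every $D_i$, hence inside the hole. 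Therefore the optimizer is the circumcenter and $\rho=R-r$.

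Third, I would bound $R$ using the law of sines. Let $a$ be the longest side of $T$ and $A$ the angle opposite $a$. From pairwise intersection, $a\le 2r$; since $A$ is the largest of three angles summing to $\pi$, one has $A\ge \pi/3$; and from Step~1, $A\le \pi/2$. Hence $\sin A\ge \sqrt{3}/2$, and $R=a/(2\sin A)\le 2r/\sqrt{3}$. Combining with Step~2, $\rho = R-r\le r(2/\sqrt{3}-1)$, matching the claimed bound (achieved in the limiting configuration of an equilateral triangle with side $2r$).

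The part I expect to be the main obstacle is Step~2: rigorously showing that the maximum of $g$ over the bounded hole is attained at the circumcenter. One has to rule out boundary optima (easy, since $g=r<R$ there) and carefully verify the bisector-walk argument case by case, depending on which pair of centers realizes the current minimum distance. Once Step~2 is established, Steps~1 and~3 are short geometric observations, and the final inequality follows immediately from the law of sines.
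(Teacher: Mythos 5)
Your proposed proof is correct; note that the paper itself does not prove this lemma but imports it from Biniaz, Bose and Wang \cite{biniaz2023simple}, so there is no in-paper argument to compare against. Your route --- (i) non-Helly forces the triangle of centers to be non-obtuse, since otherwise the midpoint of the longest side pierces all three disks; (ii) any circle of radius $\rho$ centered at $p$ inside the hole satisfies $|pc_i|\ge r+\rho$, so $\rho\le \min_i|pc_i|-r\le R-r$; (iii) $R=a/(2\sin A)\le 2r/\sqrt{3}$ since $a\le 2r$ and $A\in[\pi/3,\pi/2]$ --- is the standard argument and is essentially what the cited reference does. Two small remarks. First, Step 1 also disposes of the degenerate collinear case, and in fact rules out right angles too (the midpoint still pierces when the angle is exactly $\pi/2$), so the triangle is strictly acute; you only need $A\le\pi/2$, so this is harmless. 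Second, for the upper bound you flag Step 2 as the obstacle, but it is lighter than you suggest: you do not need to show the maximizer of $g(p)=\min_i|pc_i|$ is the circumcenter, only that $g\le R$ on the closed hole. On the hole's boundary $g\le r<R$ (non-Helly gives $R>r$), and any interior local maximum with fewer than three distances tied admits an improving direction (away from the unique nearest center, or along the relevant perpendicular bisector away from its midpoint), so the only possible interior local maximum value is $R$, attained at the circumcenter. In particular you can skip the claim that the circumcenter actually lies in the hole, which you would otherwise have to justify; that is needed only for tightness, which the lemma does not assert.
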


\begin{lemma}\label{lem:contain}
    Let $D_1$ and $D_2$ two intersecting disks with centers $c_1$ and $c_2$, respectively. Let $p$ be a point contained in the axis-aligned rectangle having $c_1c_2$ as its diagonal (Figure~\ref{fig:lemmas}(b)). If we move $c_2$ towards $p$ along $L(pc_2)$, then $D_2$ maintains the intersection with $D_1$.
\end{lemma}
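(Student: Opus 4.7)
The plan is to reduce the claim to an elementary observation: every point of the axis-aligned rectangle $R$ having $c_1c_2$ as its diagonal is at distance at most $|c_1c_2|$ from $c_1$. Once this is established, the conclusion is immediate, because during the motion the moving center stays inside $R$, and two disks of radii $r_1$ and $r_2$ intersect precisely when the distance between their centers does not exceed $r_1+r_2$. Since $D_1$ and $D_2$ are assumed to intersect, we already have $|c_1c_2|\le r_1+r_2$, so any new position with distance at most $|c_1c_2|$ from $c_1$ works.

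Concretely, I would place $c_1$ at the origin of a coordinate system whose axes are parallel to the sides of the rectangle, and write $c_2=(a,b)$ for some reals $a,b$. Then $R$ is the set of points $(u,v)$ with $u$ lying between $0$ and $a$ and $v$ lying between $0$ and $b$. Hence every $q=(u,v)\in R$ satisfies $u^2\le a^2$ and $v^2\le b^2$, which gives
\[
|c_1 q|^{2}\;=\;u^{2}+v^{2}\;\le\;a^{2}+b^{2}\;=\;|c_1c_2|^{2}.
\]
Next I would parametrize the motion as $c_2'(t)=(1-t)c_2+tp$ for $t\in[0,1]$. Since $R$ is convex and contains both the corner $c_2$ and the point $p$ (by hypothesis), the entire segment $\{c_2'(t):t\in[0,1]\}$ lies in $R$. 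Combining this with the distance bound gives $|c_1\,c_2'(t)|\le|c_1c_2|\le r_1+r_2$ for every $t\in[0,1]$, so the displaced copy of $D_2$ continues to meet $D_1$ throughout the motion.

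There is no serious technical obstacle; the argument is essentially a single coordinate computation plus convexity of the rectangle. The only point requiring care is the interpretation of the phrase \emph{moving $c_2$ towards $p$}, which I read as the motion stopping once $c_2$ reaches $p$ (i.e., $t\in[0,1]$). Continuing past $p$ would take the moving center outside $R$, and the bound $|c_1\,c_2'(t)|\le|c_1c_2|$ can genuinely fail there, so this restriction is essential.
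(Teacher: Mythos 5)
Your proof is correct and matches the paper's argument, which is the one-line observation that every point on the segment $pc_2$ is at least as close to $c_1$ as $c_2$ is; you simply make the coordinate computation and the convexity step explicit. No issues.
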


\begin{lemma}\label{lem:arc}
    Let $D_1$ and $D_2$ be two disks that intersect at two distinct points $p$ and $q$. Let $C$ be an open circular arc on the boundary of $D_1$ between $p$ and $q$, and let $r$ be a point on $C$ (Figure~\ref{fig:lemmas}(c)). If $r$ lies outside of $D_2$, then the entire arc $C$ lies outside of $D_2$. 
\end{lemma}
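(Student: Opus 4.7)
The plan is to exploit the connectedness of the open arc $C$ together with continuity of the distance to the center of $D_2$. Write $c_2$ and $r_2$ for the center and radius of $D_2$. I would first observe that two distinct circles meet in at most two points, so in fact $\partial D_1 \cap \partial D_2 = \{p,q\}$ exactly (the case $D_1=D_2$ is excluded since identical disks share an entire circle, not just two isolated points). Removing the two points $p$ and $q$ from the boundary circle $\partial D_1$ yields two open arcs, and $C$ is one of them.

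Next, I would consider the continuous function $f : C \to \mathbb{R}$ defined by $f(x) = \|x - c_2\|$. Its preimage $f^{-1}(r_2)$ equals $C \cap \partial D_2$, which is contained in $(\partial D_1 \cap \partial D_2) \setminus \{p,q\} = \emptyset$ because $C$ is an open arc missing the endpoints $p,q$. By the intermediate value theorem applied to the connected set $C$, the continuous function $f$ cannot attain values both strictly above and strictly below $r_2$ without crossing $r_2$; hence either $f(x) > r_2$ for all $x \in C$ (the whole arc lies in the exterior of $D_2$) or $f(x) < r_2$ for all $x \in C$ (the whole arc lies in the interior of $D_2$). The hypothesis $r \in C$ and $r \notin D_2$ says $f(r) > r_2$, so the first alternative holds and all of $C$ lies outside $D_2$.

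The statement is geometrically intuitive, so there is no real obstacle. The only issue worth flagging is confirming that $\partial D_1 \cap \partial D_2 = \{p,q\}$ as sets of size exactly two, which follows from the assumption that $p$ and $q$ are distinct together with the elementary fact that two different circles meet in at most two points.
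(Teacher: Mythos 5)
Your proof is correct and rests on the same key fact as the paper's: two distinct circles meet in at most two points, so the open arc $C$ never touches $\partial D_2$, and a connectedness argument finishes the job. The only cosmetic difference is that you apply the intermediate value theorem to the distance function $x \mapsto \|x-c_2\|$ along the arc itself, whereas the paper argues via connectedness of the region of $D_1$ lying outside $D_2$; your version is, if anything, slightly more self-contained.
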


\begin{lemma}\label{lem:120} 
Let $r$ be a number in the interval $[1,3]$. Let $M$ be a circle of radius $r(\frac{2}{\sqrt{3}}-1)$ with center $o=(0,0)$ and let $p$ be the point $(0,-\frac{1}{\sqrt{3}})$. Define two disks $D_1$ and $D_2$ as follows: $D_1$ touches $p$ and $M$ with its center lying in the fourth quadrant, and $D_2$ touches $M$ with its center lying on line $y=- \frac{x}{\sqrt{3}}$ in the second quadrant (Figure~\ref{fig:lemmas}(d)). If the radii of the disks are at most three, then $D_1$ cannot intersect $D_2$ below the line $y=- \frac{x}{\sqrt{3}}$. 
\end{lemma}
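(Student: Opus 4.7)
The plan is to prove the stronger statement that $D_1$ and $D_2$ are disjoint under the lemma's hypotheses; the claim then follows trivially. Setting $\rho:=r(\tfrac{2}{\sqrt 3}-1)\le 2\sqrt 3-3$, external tangency gives $|c_1|=r_1+\rho$ and $|c_2|=r_2+\rho$. I parametrize $c_1=(r_1+\rho)(\cos\theta_1,\sin\theta_1)$ with $\theta_1\in(-\pi/2,0)$, and $c_2=(-t,t/\sqrt 3)$ with $t=\tfrac{\sqrt 3}{2}(r_2+\rho)$. Combining $p\in\partial D_1$ with $|c_1|=r_1+\rho$ eliminates $x_1$ and gives
\[
y_1=-\sqrt 3\Bigl(r_1\rho+\tfrac{\rho^{2}}{2}+\tfrac{1}{6}\Bigr),\qquad x_1=\sqrt{(r_1+\rho)^{2}-y_1^{2}},
\]
which determines $\theta_1$ as a function of $r_1$ and $\rho$.

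Since $c_2$ is at angle $150^\circ$ from $o$, we have $\angle c_1oc_2=150^\circ-\theta_1\in(150^\circ,240^\circ)$, and the law of cosines yields
\[
|c_1-c_2|^{2}=(r_1+\rho)^{2}+(r_2+\rho)^{2}+2(r_1+\rho)(r_2+\rho)\cos(\theta_1+30^\circ),
\]
with $\cos(\theta_1+30^\circ)>0$ throughout the allowed $\theta_1$-range. After expansion, the strict inequality $|c_1-c_2|>r_1+r_2$ is equivalent to
\[
(r_1+\rho)(r_2+\rho)\cos(\theta_1+30^\circ)\;>\;r_1r_2-\rho(r_1+r_2+\rho).
\]
When $r_1r_2\le\rho(r_1+r_2+\rho)$, the right-hand side is non-positive and the inequality is immediate. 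In the complementary regime I substitute the explicit formula $\cos(\theta_1+30^\circ)=\sqrt 3(x_1+r_1\rho+\rho^{2}/2+1/6)/(2(r_1+\rho))$ together with the formula for $x_1$ above, and use $r_1,r_2\le 3$ and $\rho\le 2\sqrt 3-3$ to reduce the claim to a polynomial inequality in $(r_1,r_2,\rho)$ that can be verified on the bounded feasible region.

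The main obstacle is the polynomial verification in the high-radius regime. The tightest case is $r_1=r_2=3$ with $r=3$, in which the bound would be sharp in the (unrealisable) limit $\theta_1\to-\pi/2$; the hyperbolic constraint $|c_1|-|c_1-p|=\rho$ keeps $\theta_1$ bounded strictly away from $-\pi/2$ whenever $r_1\le 3$, so a strict gap is preserved and can be quantified algebraically. A possibly cleaner geometric route would apply Lemma~\ref{lem:contain} to reduce to an extremal configuration by sliding $D_2$ along $\ell$, and then invoke Lemma~\ref{lem:arc} to promote the resulting single-point verification to the statement about the intersection lens. Either route yields $D_1\cap D_2=\emptyset$, from which the lemma is immediate.
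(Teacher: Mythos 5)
Your setup is sound as far as it goes: the parametrization, the law-of-cosines identity, the formula $\cos(\theta_1+30^\circ)=\sqrt{3}\,(x_1+r_1\rho+\rho^2/2+1/6)/(2(r_1+\rho))$, and the case split on the sign of $r_1r_2-\rho(r_1+r_2+\rho)$ all check out, and the stronger disjointness claim you aim for does appear to hold numerically. But there is a genuine gap exactly where the content of the lemma lies: the ``polynomial inequality in $(r_1,r_2,\rho)$ that can be verified on the bounded feasible region'' is never verified, and it is not a routine check. As you yourself note, if $\theta_1$ is treated as a free parameter in $(-90^\circ,0^\circ)$ the inequality \emph{fails} in the limit $\theta_1\to-90^\circ$ (with $r_1=r_2=3$ and $\rho=2\sqrt{3}-3$ one gets exact tangency), so everything hinges on the coupling between $\theta_1$ and $r_1$ imposed by the hyperbola $|c_1|-|c_1-p|=\rho$. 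Your justification --- that this constraint ``keeps $\theta_1$ bounded strictly away from $-\pi/2$ whenever $r_1\le 3$'' --- is false as stated: the vertex of that hyperbola branch lies on the negative $y$-axis, so $\theta_1=-90^\circ$ is attained, namely at $r_1=(|op|-\rho)/2\le 3$. It is \emph{large} $r_1$ that forces $\theta_1$ away from $-90^\circ$, while for small $r_1$ the inequality must instead be rescued by the right-hand side $r_1r_2-\rho(r_1+r_2+\rho)$ becoming small or negative; gluing these regimes together is a genuine three-parameter optimization that is not carried out. Your identification of the tightest case is also off for realizable configurations: with $r_1=r_2=3$ the gap $|c_1-c_2|-(r_1+r_2)$ is about $0.29$ at $\rho=\frac{2}{\sqrt{3}}-1$ (i.e.\ $r=1$) versus about $0.74$ at $\rho=3(\frac{2}{\sqrt{3}}-1)$, so the extremal analysis would have to be redone at the small-$\rho$ end. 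Until the verification is actually performed, the lemma is not proved.

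For contrast, the paper avoids the multi-parameter optimization by exploiting the hypothesis you discard: assuming an intersection point below $y=-x/\sqrt{3}$ forces $c_1$ on or below $L(c_2p)$, which licenses inflating both disks to radius exactly $3$ and then rotating $D_1$ about $p$ (preserving the intersection) to a single canonical position $c_1\approx(1.915,-2.887)$; what remains is a one-parameter distance computation over $r\in[1,3]$. If you wish to keep the stronger disjointness route, you must either complete the algebraic verification --- substitute $x_1=\sqrt{(r_1+\rho)^2-3(r_1\rho+\rho^2/2+1/6)^2}$, clear the square root, and rigorously bound the resulting polynomial on $(0,3]^2\times[\frac{2}{\sqrt{3}}-1,\,3(\frac{2}{\sqrt{3}}-1)]$ --- or find monotonicity reductions analogous to the paper's that collapse the feasible region to a curve.
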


\noindent
{\bf Finding Piercing Points.} 
Let $Q$ be the smallest circle that intersects every disk in $\mathcal{D}$, which can be computed in linear time~\cite{loffler2010largest}. If $Q$ has zero radius, then a single point pierces all disks. Otherwise, $Q$ is an inscribed circle of a non-Helly triple of $\mathcal{D}$~\cite{biniaz2023simple}. The following lemma can now be used to find three piercing points $A,B,C$ for the case of unit disks.
 
\begin{lemma}[Biniaz,  Bose and  Wang~\cite{biniaz2023simple}]\label{lem:abc}
Let $S$ be a set of pairwise intersecting unit disks that contains a non-Helly triple. Assume that the smallest circle that intersects every disk in $S$ is centered at $(0,0)$. Then the points $A=(0,-\frac{1}{\sqrt{3}})$, $B=(1/2,1/2\sqrt{3})$ and $C=(-1/2,1/2\sqrt{3})$ pierce all the disks of $S$.
\end{lemma}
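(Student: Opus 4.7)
The plan is to reduce the lemma to a purely geometric covering statement and dispatch it by symmetry. First observe that $A$, $B$, $C$ lie at distance $1/\sqrt{3}$ from the origin $o$ with pairwise central angles of $120^{\circ}$; they are the vertices of an equilateral triangle of side length $1$. For $X \in \{A, B, C\}$, the point $X$ pierces a disk $D \in S$ with center $c$ if and only if $|cX| \leq 1$, so the lemma is equivalent to the claim that every disk center $c$ lies within Euclidean distance $1$ of at least one of $A$, $B$, $C$.

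To localize the disk centers, I would apply Lemma~\ref{lem:hradi} with $r = 1$: since $Q$ is the inscribed circle of a non-Helly triple, its radius satisfies $r_Q \leq 2/\sqrt{3} - 1$. Because every $D \in S$ meets $Q$ and has radius $1$, the center $c$ of $D$ satisfies $|oc| \leq r_Q + 1 \leq 2/\sqrt{3}$. Hence all disk centers lie in the closed disk $\mathcal{B}$ of radius $2/\sqrt{3}$ around $o$, and it suffices to show that the three unit disks $U_A$, $U_B$, $U_C$ centered at $A$, $B$, $C$ together cover $\mathcal{B}$.

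For the covering step I would exploit the threefold rotational symmetry about $o$: the $120^{\circ}$ rotation cyclically permutes $A$, $B$, $C$, so it suffices to show that the closed $120^{\circ}$ circular sector $\Sigma$ of $\mathcal{B}$ bisected by the ray $oA$ lies in $U_A$. The perpendicular bisectors of $AB$ and of $AC$ both pass through $o$ (by equilateral symmetry) and coincide with the radial boundaries of $\Sigma$, so $A$ is the closest of the three points to every $c \in \Sigma$. Since the map $c \mapsto |c - A|$ is convex and $\Sigma$ is convex with extreme points consisting of $o$ and the circular arc $\partial\mathcal{B} \cap \Sigma$, the maximum of $|cA|$ over $\Sigma$ is attained on that arc, and along the arc at one of its two endpoints by a further one-variable convexity argument. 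Substituting the endpoint $c = (-1,\, -1/\sqrt{3})$ (the radial direction at angle $-150^{\circ}$ at distance $2/\sqrt{3}$) yields $|cA|^2 = 1 + 0 = 1$, which establishes the covering and hence the lemma. The main technical care lies in justifying these two successive reductions cleanly; without the symmetry shortcut to $\Sigma$, a direct case analysis over the angular position of $c$ around $o$ would be noticeably more tedious.
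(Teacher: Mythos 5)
The paper does not prove this lemma at all --- it imports it verbatim from Biniaz, Bose and Wang~\cite{biniaz2023simple} and uses it as a black box --- so there is no in-paper argument to compare against. Your proof is correct and self-contained given two facts the paper does state: Lemma~\ref{lem:hradi}, and the fact (also cited to~\cite{biniaz2023simple} in the text preceding the lemma) that the smallest circle $Q$ meeting every disk is the inscribed circle of some non-Helly triple; you need the latter because the inscribed circle of the particular triple guaranteed by the hypothesis need not intersect all of $S$, so the radius bound $r_Q\le 2/\sqrt{3}-1$ really does require knowing that $Q$ itself arises as such an inscribed circle. The rest checks out: every center lies in the disk $\mathcal{B}$ of radius $r_Q+1\le 2/\sqrt{3}$ about $o$; by the threefold symmetry it suffices to cover the $120^\circ$ sector bisected by $oA$ with the unit disk about $A$; the maximum of the convex function $c\mapsto|cA|$ over that convex sector is attained on the bounding arc, where $|cA|^2=\tfrac{5}{3}-\tfrac{4}{3}\cos\phi$ with $|\phi|\le 60^\circ$, giving $|cA|\le 1$ with equality exactly at the arc endpoints such as $(-1,-1/\sqrt{3})$. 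This shows the covering is tight, which is a nice byproduct. The only polish I would suggest is to state explicitly that you are invoking the ``$Q$ is an inscribed circle of a non-Helly triple'' fact as an external input rather than deriving it, since it is the one step of your argument that is not elementary.
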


Assume that the smallest circle  that intersects all [1,3]-disks of $\mathcal{D}$ is $Q$, where $Q$ is determined by a non-Helly triple. Let $D_b, D_\ell, D_r$ be the disks of the non-Helly triple in clockwise order around $Q$  (Figure~\ref{fig:setup}(a)). Let  $c_b,c_\ell,c_r,c_q$  be the centers of $D_b,D_\ell,D_r,Q$, respectively. Let  $o$ be the origin  $(0,0)$ and assume that  $\angle c_\ell o c_r$ is the smallest angle among $\{\angle c_\ell o c_r,\angle c_r o c_b,\angle c_b o c_\ell\}$. 
We translate $\mathcal{D}$ such that the center of $Q$ coincides with  $o$, and rotate the plane such that  $c_b$ lies below $o$. The following property is known in this setting. 
\begin{lemma}[Carmi,  Katz  and Morin~\cite{carmi2023stabbing}]\label{lem:tan}
The tangent at the touching point of $D_l$ and $Q$ has a positive slope and the tangent at the touching point of   $D_r$ and $Q$ has a negative slope.
\end{lemma}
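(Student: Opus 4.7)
The plan is to exploit external tangency together with the combinatorial constraints on the three clockwise arcs around $o$ formed by $c_b, c_\ell, c_r$. Since $Q$ meets each $D_i$ externally, the touching point of $Q$ with $D_i$ lies on the segment $oc_i$ and the common tangent line at that point is perpendicular to $oc_i$. Consequently, the slope of the tangent at the touching point of $D_\ell$ and $Q$ is positive iff $c_\ell$ lies strictly in the open second quadrant, and the slope of the tangent at the touching point of $D_r$ and $Q$ is negative iff $c_r$ lies strictly in the open first quadrant. Since $c_b$ has been rotated onto the negative $y$-axis, the lemma reduces to establishing these two angular positions.

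I would next introduce the clockwise arcs $\alpha = \angle c_b o c_\ell$, $\beta = \angle c_\ell o c_r$, and $\gamma = \angle c_r o c_b$, so that $\alpha + \beta + \gamma = 360^\circ$, and by hypothesis $\beta \le \alpha$ and $\beta \le \gamma$. Applying the standard minimality characterization of the smallest enclosing/intersecting circle to the three tangent points of $Q$ with $D_b, D_\ell, D_r$ (which lie on the rays $oc_b, oc_\ell, oc_r$), the fact that $\{D_b, D_\ell, D_r\}$ is a non-Helly triple genuinely defining $Q$ forces $o$ to lie strictly inside the triangle spanned by those tangent points; equivalently $\alpha, \beta, \gamma < 180^\circ$. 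A short deduction then gives $\alpha > 90^\circ$: if $\alpha \le 90^\circ$, then $\beta \le \alpha \le 90^\circ$ would force $\gamma = 360^\circ - \alpha - \beta \ge 180^\circ$, contradicting $\gamma < 180^\circ$. The symmetric argument gives $\gamma > 90^\circ$. Translating back, $c_\ell$ sits at angular position $270^\circ - \alpha \in (90^\circ, 180^\circ)$ and $c_r$ at angular position $\gamma - 90^\circ \in (0^\circ, 90^\circ)$, which are exactly the quadrants required; the perpendicularity of each tangent to the corresponding radial segment then yields the claimed signs.

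The step I expect to be the main obstacle is the strict inequalities $\alpha, \beta, \gamma < 180^\circ$ rather than merely the non-strict versions, since the $90^\circ$ bounds above rely on strict containment of $o$ in the tangent triangle. If, for instance, $\gamma = 180^\circ$, then $D_b$ and $D_r$ would sit diametrically opposite through $o$ and $Q$ would already be the smallest circle intersecting just $\{D_b, D_r\}$, with $D_\ell$ only incidentally tangent to $Q$. Ruling this out requires a small argument that a genuine non-Helly triple defining $Q$ cannot have its three tangent points lying on a single closed semicircle of $Q$, as otherwise one of the three disks would be redundant in determining $Q$, contradicting the hypothesis that $Q$ is the inscribed circle of the non-Helly triple $\{D_b, D_\ell, D_r\}$.
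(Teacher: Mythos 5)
The paper does not prove this lemma at all---it is imported verbatim from Carmi, Katz and Morin~\cite{carmi2023stabbing}---so there is no in-paper argument to compare against; your proof has to stand on its own, and it essentially does. The reduction to the angular positions of $c_\ell$ and $c_r$ is right: $Q$ is externally tangent to each disk of the triple, so each touching point lies on $oc_i$ with the common tangent perpendicular to $oc_i$, and with $c_b$ on the negative $y$-axis the claim becomes $90^\circ<\alpha<180^\circ$ and $90^\circ<\gamma<180^\circ$ for the clockwise arcs $\alpha=\angle c_b o c_\ell$, $\gamma=\angle c_r o c_b$. Your angle chase from ``all arcs $<180^\circ$'' plus ``$\beta$ is smallest'' to these bounds is correct. (Minor nit: your ``iff'' between tangent sign and quadrant is not literally an iff---the fourth quadrant also makes $oc_\ell$ have negative slope---but you only use the direction you actually prove.)

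The one step you rightly flag as delicate, strictness of the bound $\alpha,\beta,\gamma<180^\circ$, is closed more cleanly than by your ``redundancy'' heuristic, which as stated is not yet a contradiction. The non-strict bound follows from optimality of $Q$ (if all three unit vectors $v_i$ from $o$ toward the $c_i$ lay in an open halfplane, one could translate $o$ so that every $|oc_i|$ increases and grow the inscribed circle, or equivalently shrink the smallest intersecting circle). For strictness, suppose some arc equals $180^\circ$, say $\gamma$; then the touching points $t_r,t_b$ are antipodal on $Q$, so $|c_rc_b|=|c_ro|+|oc_b|=r_r+r_b+2r_q>r_r+r_b$, contradicting that the disks of $\mathcal{D}$ are pairwise intersecting (this is also why a positive-radius $Q$ cannot be determined by only two of the disks). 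With that substitution your argument is complete and self-contained.
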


\begin{figure}
    \centering
    \includegraphics[width=\linewidth]{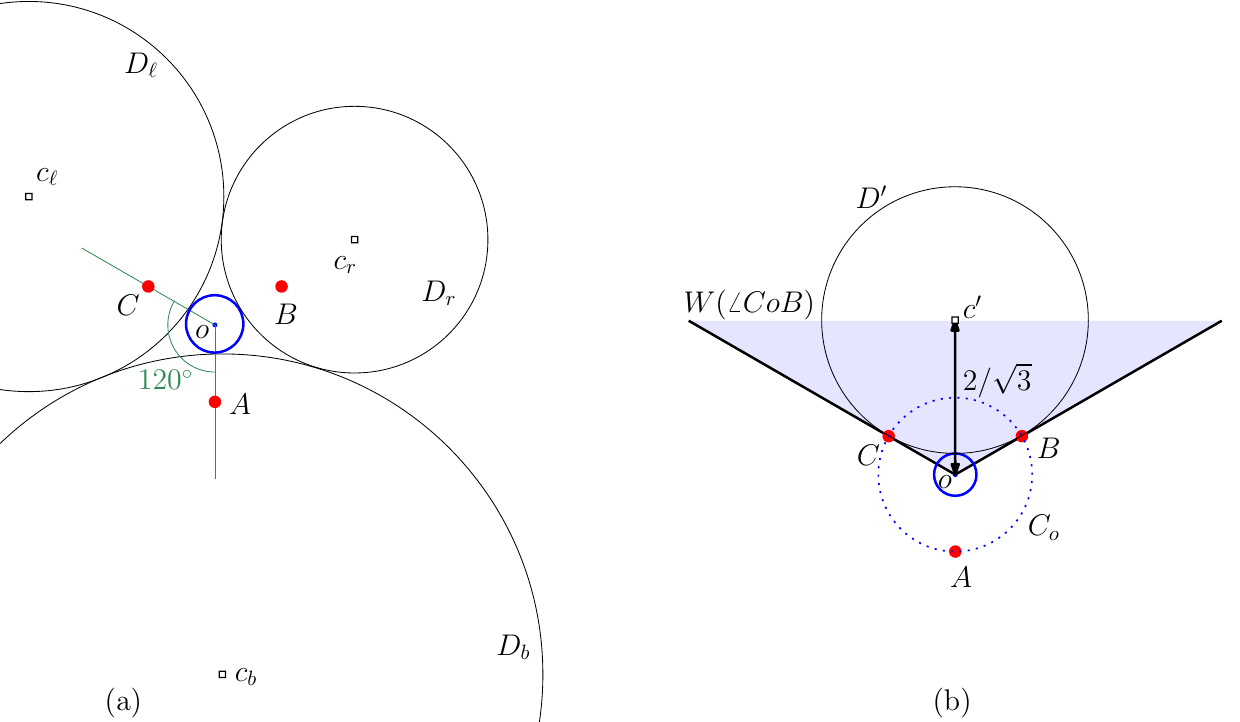}
    \caption{(a) Illustration for $\{D_b,D_\ell,D_r\}$, $\{A,B,C\}$ and $Q$, where $Q$ is in blue. Some disks are drawn partially for space constraints. (b) A unit disk touching  $B,C$ with center in $W(\angle CoB)$.}
    \label{fig:setup}
\end{figure}

We now show that the points $A,B,C$ of Lemma~\ref{lem:abc} pierce all disks in $\mathcal{D}$. 
Suppose for a contradiction that $D'$ is  a disk in $\mathcal{D}$ with center $c'$ and radius $r'$ such that $D'$ is not pierced by any point of $\{A,B,C\}$ but it intersects all three  disks $D_b,D_\ell,D_r$ and $Q$. Let $r_q$ be the radius of $Q$. Let $C_o$ be the circle of radius $\frac{1}{\sqrt{3}}$, i.e., a circle that  passes through $A,B,C$.
 
We first show that $r_q$ must be at least $ (\frac{2}{\sqrt{3}}-1)$ and $c'$ must lie outside of $C_o$.  The three bisectors of pairs of points from $\{A,B,C\}$ meet at $o$. Therefore, every point inside $C_o$ is within a distance of $\frac{1}{\sqrt{3}}$ from some point of $\{A,B,C\}$. Since $D'$ is not pierced, $c'$ cannot lie in $C_o$. Assume now that $c'$ lies outside of $C_o$. Since $D'$ is not pierced, we can move $D'$ such that $c'$ becomes equidistant from two of the points of $\{A,B,C\}$. Without loss of generality assume that $c'$ is equidistant from $B$ and $C$. If $D'$ is a unit disk, then    $|oc'|=\frac{2}{\sqrt{3}}$ (Figure~\ref{fig:setup}(b)). If $r_q$ is smaller than $ (\frac{2}{\sqrt{3}}-1)$, then $D'$ cannot intersect $Q$. If $D'$ has a larger radius,  then it is further away from $Q$,  contradicting that $D'$ intersects $Q$. 

Since the largest radius among the disks of $\mathcal{D}$ is three, the radius of $Q$ is at most $3(\frac{2}{\sqrt{3}}-1)$ (Lemma~\ref{lem:hradi}). The   remainder of the section assumes that $  (\frac{2}{\sqrt{3}}-1)\le r_q\le 3(\frac{2}{\sqrt{3}}-1)$ and   $c'$ is   outside of $C_o$. We now consider two cases based on the location of $c'$  and show that in each case  $D'$ is pierced.






\smallskip
\noindent
{\bf Case 1 ($c'\in W(\angle CoB)$).}   
Note that $D'$ cannot contain $o$. We now assume without loss of generality that  $D'$ intersects $D_b$ on the left halfplane of $L(oA)$, and hence below $L(oC)$. We translate $D'$ along the line $L(c'o)$ such that it touches $C$. Since $D'$ previously intersected $D_b$ and $Q$, by Lemma~\ref{lem:contain} it continues to intersect $D_b$ and $Q$ during this translation.  
Since $D'$ intersects $D_b$ and touches $C$, $c'$ is on the left halfplane of the line $x=C$. 
We now rotate $D'$ counter-clockwise without changing distance $|c'C|$ such that it touches $Q$ (Figure~\ref{fig:transf}(a)). Let $c'_{old}$ be the position of $c'$ before the rotation. Since $c'_{old}$ is on the left halfplane of the line $x=C$, $\angle c'c'_{old}c_b$ cannot be larger than $\angle c'_{old}c'c_b$. Hence $|c'c_b|\le |c_{old}c_b| $, and $D'$ continues to  intersect $D_b$.  We now reach a contradiction by  Lemma~\ref{lem:120} considering $C, Q, D'$ and $D_b$ as $p,M,D_2$ and $D_1$, respectively (Figure~\ref{fig:transf}(b)).

\begin{figure}[h]
    \centering
    \includegraphics[width=0.65\linewidth]{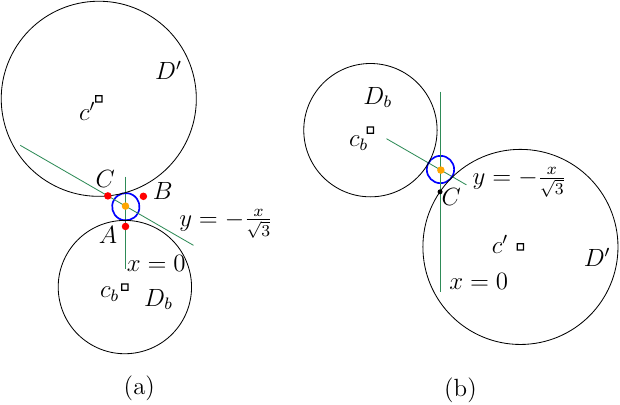}
    \caption{(a) Illustration for Case 1, where $Q$ is shown in blue and $o$ is marked in orange. (b) A transformation of the plane to show the relation to Lemma~\ref{lem:120}. }
    \label{fig:transf}
\end{figure}

\smallskip
\noindent
{\bf Case 2 ($c'\in W(\angle AoC)$ or $c'\in W(\angle BoA)$.} It suffices to consider the case when $c\in W(\angle BoA)$ because the argument in the other case is symmetric. 
Since $D'$ cannot contain $o$, we can consider two subcases depending on whether $D'$ intersects $D_\ell$ above or below $L(oC)$.

\smallskip
\noindent
{\bf Case 2.1 ($D'$ intersects $D_\ell$ below $L(oC)$).} We handle this case considering the following two scenarios. If $c_\ell$ lies above $L(oC)$, then we reposition $c_\ell$ and $D'$ so that $c_\ell$ lies on $L(oC)$ and $D'$ touches $A$ and   $Q$. We show how to reach a contradiction using Lemma~\ref{lem:120}.  If $c_\ell$ lies below  $L(oC)$, then we  observe by Lemma~\ref{lem:tan} that $D_\ell$ touches the upper-left arc of $D_b$. We then show that $D'$ avoids this arc, and hence avoids $D_\ell$. The details are   in Appendix~\ref{app:case2}. 
 
\smallskip 
\noindent
{\bf Case 2.2 ($D'$ intersects $D_\ell$ above $L(oC)$).}  The intuition for handling this case is as follows. Since $D_\ell$ touches $D_b$,  $\angle c_b o c_\ell$ cannot be very large. Furthermore, the constraint that $D'$ must intersect $Q$ and avoid $C$, pushes $D'$ away from the positive x-axis. We thus reach a contradiction by showing that $D'$ cannot intersect $D_\ell$. The details are   in Appendix~\ref{app:case2}.

\bigskip
\noindent
The following theorem summarizes the result of this section.
\begin{theorem}\label{thm:pier}
    Let $\mathcal{D}$ be a set of $n$ pairwise intersecting disks with radii in the interval $[1,3]$. Then one can compute three points that pierce $\mathcal{D}$ in $O(n)$ time.
\end{theorem}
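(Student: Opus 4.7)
The plan is to derive the theorem by assembling the machinery already built in Section~\ref{sec:piercing}. First, I would invoke the algorithm of Löffler and van Kreveld~\cite{loffler2010largest} to compute in $O(n)$ time the smallest circle $Q$ that intersects every disk in $\mathcal{D}$. If $Q$ has radius zero, then its center alone pierces $\mathcal{D}$ and I simply output that point (repeated three times). Otherwise, as observed in the preceding discussion, $Q$ is the inscribed circle of some non-Helly triple $\{D_b, D_\ell, D_r\} \subseteq \mathcal{D}$; such a certifying triple can be recovered in $O(n)$ time, either as a byproduct of the smallest-circle computation or by a single linear scan for three disks of $\mathcal{D}$ tangent to $Q$ whose centers surround its center.

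Next, I would apply the normalization used throughout Section~\ref{sec:piercing}: translate the plane so the center of $Q$ is at the origin $o$, select among $\angle c_\ell o c_r$, $\angle c_r o c_b$, $\angle c_b o c_\ell$ the smallest angle (placing the two disks spanning it as $D_\ell, D_r$), and rotate so that $c_b$ lies directly below $o$. This makes Lemma~\ref{lem:tan} available. Then I would output the three points $A=(0,-\tfrac{1}{\sqrt{3}})$, $B=(\tfrac{1}{2},\tfrac{1}{2\sqrt{3}})$, $C=(-\tfrac{1}{2},\tfrac{1}{2\sqrt{3}})$ of Lemma~\ref{lem:abc}. Correctness is exactly the case analysis just completed: any disk $D'\in\mathcal{D}$ either has its center inside the circle $C_o$ through $A,B,C$, in which case $D'$ is trivially pierced since the three bisectors of $\{A,B,C\}$ meet at $o$ and partition $C_o$ into three regions of diameter $\tfrac{1}{\sqrt{3}}$; or its center lies outside $C_o$ in one of the wedges $W(\angle CoB)$, $W(\angle AoC)$, $W(\angle BoA)$, handled by Cases~1, 2.1, and 2.2, respectively, each reaching a contradiction from the assumption that $D'$ avoids all three points while intersecting $Q$ and each of $D_b, D_\ell, D_r$.

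The main obstacle, which has been absorbed into the preceding lemmas, is that the case analysis must be uniform over the allowed range $r_q\in [(\tfrac{2}{\sqrt{3}}-1),\,3(\tfrac{2}{\sqrt{3}}-1)]$ provided by Lemma~\ref{lem:hradi} and over all radii $r'\in[1,3]$ of the hypothetical unpierced disk $D'$. The key is that each reduction (translating along $L(c'o)$, rotating about $C$) preserves the intersection with $D_b$ via Lemma~\ref{lem:contain}, so that the extremal configuration is precisely the one ruled out by Lemma~\ref{lem:120}; this is what lets the unit-disk piercing points of Lemma~\ref{lem:abc} continue to work in the wider radius regime. With these pieces in place, the theorem follows: the three points $A,B,C$ are produced in $O(n)$ time and pierce every disk in $\mathcal{D}$.
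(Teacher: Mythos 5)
Your proposal is correct and takes essentially the same approach as the paper: the theorem is just the summary of Section~\ref{sec:piercing}, whose proof is exactly the linear-time smallest-circle computation of~\cite{loffler2010largest}, the zero-radius shortcut, the non-Helly-triple normalization, and the case analysis showing that $A,B,C$ of Lemma~\ref{lem:abc} pierce every disk. One minor mislabeling: the wedges $W(\angle AoC)$ and $W(\angle BoA)$ are both covered by the paper's Case~2 (by symmetry), with subcases~2.1 and~2.2 distinguished by whether $D'$ meets $D_\ell$ below or above $L(oC)$ rather than by which wedge contains $c'$ --- this does not affect correctness.
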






\subsubsection{A 3/2-Approximation for the Maximum Clique in $G(\mathcal{D})$}

We begin by computing the arrangement of the disks and, for each cell, identify a clique determined by the set of disks intersecting that cell. We next enumerate all non-Helly triples, which requires   $O(n^3)$  guesses. For each such triple, we obtain three piercing points by Theorem~\ref{thm:pier}. For every pair of piercing points $p, p'$, we compute the maximum clique in the intersection graph induced by the disks pierced by $p$ and $p'$. Finally, we return the largest clique among all the cliques that have been computed. 

\begin{theorem}\label{thm:approxdisk}
Given  $n$ disks with radii in $[1,3]$, we can compute a clique $C$ in their intersection graph $G$  in $O(n^3f(n))$ time such that $|C|\ge 2|M|/3 $. Here $M$ is a maximum clique in $G$ and $f(n)$ is the time to find a maximum clique in an $n$-vertex cobipartite graph.
\end{theorem}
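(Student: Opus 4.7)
The plan is to combine the piercing guarantee of Theorem~\ref{thm:pier} with the elementary observation that disks through a common point pairwise intersect: for any two points $p,p'$, the subgraph of $G$ induced by the disks of $\mathcal{D}$ pierced by $p$ or $p'$ is cobipartite (a union of two point-pierced cliques), so its maximum clique is computable in $f(n)$ time. A pigeonhole argument over three piercing points will then yield the $2/3$ ratio.

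First, I would build the arrangement of the $n$ disks and, for every cell $c$, record the set of disks whose interiors contain $c$; each such set is itself a clique. This $O(n^3)$-time step handles the Helly case: if the optimum $M$ has a common intersection point, that point lies in some cell whose associated clique is exactly $M$.

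Next, enumerate every pairwise-intersecting non-Helly triple of disks of $\mathcal{D}$, of which there are $O(n^3)$. Treating each triple as if it were the critical triple $(D_b,D_\ell,D_r)$ of Section~\ref{sec:piercing}, compute its inscribed circle, apply the translation and rotation that place this circle at the origin with $c_b$ below it, and read off the three candidate piercing points $A,B,C$. For each of the three pairs among $\{A,B,C\}$, collect the disks of $\mathcal{D}$ containing at least one point of the pair, compute a maximum clique of the resulting cobipartite graph in time $f(n)$, and retain the best clique seen. Return the overall maximum; the cost is $O(n^3)\cdot O(f(n))=O(n^3 f(n))$, which absorbs the arrangement step.

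For correctness, suppose $M$ is not Helly-pierced. Let $Q$ be the smallest circle intersecting every disk of $M$; as recalled in Section~\ref{sec:piercing}, $Q$ is the inscribed circle of some non-Helly triple $(D_b,D_\ell,D_r)\subseteq M$, which is therefore tried by the enumeration, and Theorem~\ref{thm:pier} then certifies that the three associated points $A,B,C$ pierce every disk of $M$. Assigning each disk of $M$ to a single piercing point that contains it partitions $M$ into three classes whose sizes sum to $|M|$, so some pair of classes has combined size at least $2|M|/3$; those disks form a clique (they all lie in $M$) inside the cobipartite graph of the corresponding pair of points, whose maximum clique --- found exactly by the algorithm --- is thus at least $2|M|/3$. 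The main obstacle, showing that three points always suffice to pierce a pairwise-intersecting family of $[1,3]$-disks, has already been discharged by Theorem~\ref{thm:pier}; what remains is this pigeonhole bookkeeping together with verifying that the critical triple of $M$ is among the $O(n^3)$ triples enumerated.
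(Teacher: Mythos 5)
Your proposal matches the paper's proof essentially step for step: handle the Helly case via the disk arrangement, enumerate the $O(n^3)$ non-Helly triples so that the critical triple determining the smallest circle meeting every disk of $M$ is tried, invoke Theorem~\ref{thm:pier} to get three piercing points, solve a cobipartite maximum-clique instance for each pair of points, and conclude by pigeonhole that some pair covers at least $2|M|/3$ disks of $M$. The pigeonhole bookkeeping you spell out is exactly the observation the paper leaves implicit, so there is nothing further to add.
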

The theorem holds for any configuration of $[1,3]$-disks, and hence applies also in the scenario when all disks are stabbed by a line. 

\section{Conclusion}
There are several avenues for future research.  The case for stabbed segments with endpoints on $O(1)$ lines remains open. An intriguing direction is to establish APX-hardness or   design a PTAS for upward ray graphs. One may also attempt to generalize the results we obtained for $[1,3]$-disks. It would be interesting to find better approximations with faster running time.

\bibliographystyle{abbrv}
\bibliography{ref}

\newpage
\appendix

\section{Construction of a Sail Skeleton}\label{app:sail}

Assume that  $k'=\lceil k/2\rceil+1$. We will ensure the sail-skeleton satisfies the following properties. 
\begin{enumerate}
    \item[P1.] For each $i$ from $1$ to $k'$, the absolute values of the slopes of $L_{2i-1}$ and $L_{2i}$ are the same.
    \item[P2.] For each $i$ from $2$ to $k'$, the slope of $L_{2i}$ is larger than $L_{2(i-1)}$.
    \item[P3.]  For each $i$ from $2$ to $2k'$, the origin of $L_i$ lies on $L_{i-1}$, and $L_i$ intersects all other skeleton  rays, i.e.,  $\{L_1,L_2,\ldots,L_{2k'}\}\setminus L_i$.
\end{enumerate}

We now describe an incremental construction for $\mathcal{S}_\ell$ and $\mathcal{S}_r$ that satisfies the properties P1--P3. Assume that  $\alpha = 90^\circ/2k'$ and let $\beta$ be a small positive constant.  

Let $L_1$ be the upward skeleton ray with origin at   $(0,0)$ and $(90^\circ+\alpha)$ angle of inclination with the positive x-axis. Let $z$ be a point on $L_1$. We define  $L_2$ to be the skeleton ray with origin $z$ and angle of inclination $(90^\circ-\alpha)$. It is straightforward to verify properties P1--P3 for $\{L_1, L_2\}$. For each $i$ from $2$ to $k'$, we set the angle of inclination of $L_{2i-1}$ and $L_{2i}$ to  $(90^\circ+i\alpha)$ and $(90^\circ-i\alpha)$, respectively, which satisfies P1--P2. We now describe the origins of $L_{2i-1}$ and $L_{2i}$ and show that $\{L_1, L_2, \ldots, L_{2i}\}$ satisfy P3. 

Let $e^\ell_1,e^\ell_3,\ldots,e^\ell_{2i-1}$ be the  extremal path of $\mathcal{S}_\ell$, and let $e^r_2,e^r_4,\ldots,e^r_{2i-2}$ be the  extremal path of $\mathcal{S}_r$. If $i$ is even, then let $p$ be a point on $e^r_{2i-2}$ located at  distance $\beta$ from the lower endpoint of $e^r_{2i-2}$ (e.g., the zoomed-in view of Figure~\ref{fig:h01}(d)). We set the origin of $L_{2i-1}$ as $p$. Since $L_{2i-1}$ has a negative slope and $L_2,\ldots,L_{2i-2}$ have positive slopes, $L_{2i-1}$ intersects all these skeleton  rays with even indices. Since $L_{2i-1}$ has a higher angle of inclination, it intersects all skeleton rays  $L_1,\ldots,L_{2i-3}$. Similarly, when $i$ is odd, we set  the origin of $L_{2i}$ at a point on $e^r_{2i-1}$ located at  distance $\beta$ from the lower endpoint of $e^r_{2i-1}$ and show that it intersects all skeleton rays in $\{L_1,\ldots,L_{2i-1}\}$. Consequently, property P3 becomes satisfied.

\section{Construction of a Non-degenerate Sail}
\label{app:degenerate}
Let $v_1,v_2,\ldots,v_|\sigma_j|$ be the vertices of $\sigma_j$ and let $r_1,r_2,\ldots,v_|\sigma_j|$ be the corresponding rays. Initially all the rays pass through the centroid of $\xi(j+1)$ (Figure~\ref{fig:h11}).  We do not move the ray for $v_1$ and $v_|\sigma_j|$. For $i$ from 2 to $|\sigma_j|-1$, we move $r_i$ closer to $r_1$ to form the edge $e'_i$ by intersecting $e'_{i-1}$. Since we do not require polynomially bounded coordinates, this process yields the required non-degenerate sail. 
 
\begin{figure}[h]
    \centering
    \includegraphics[width=\linewidth]{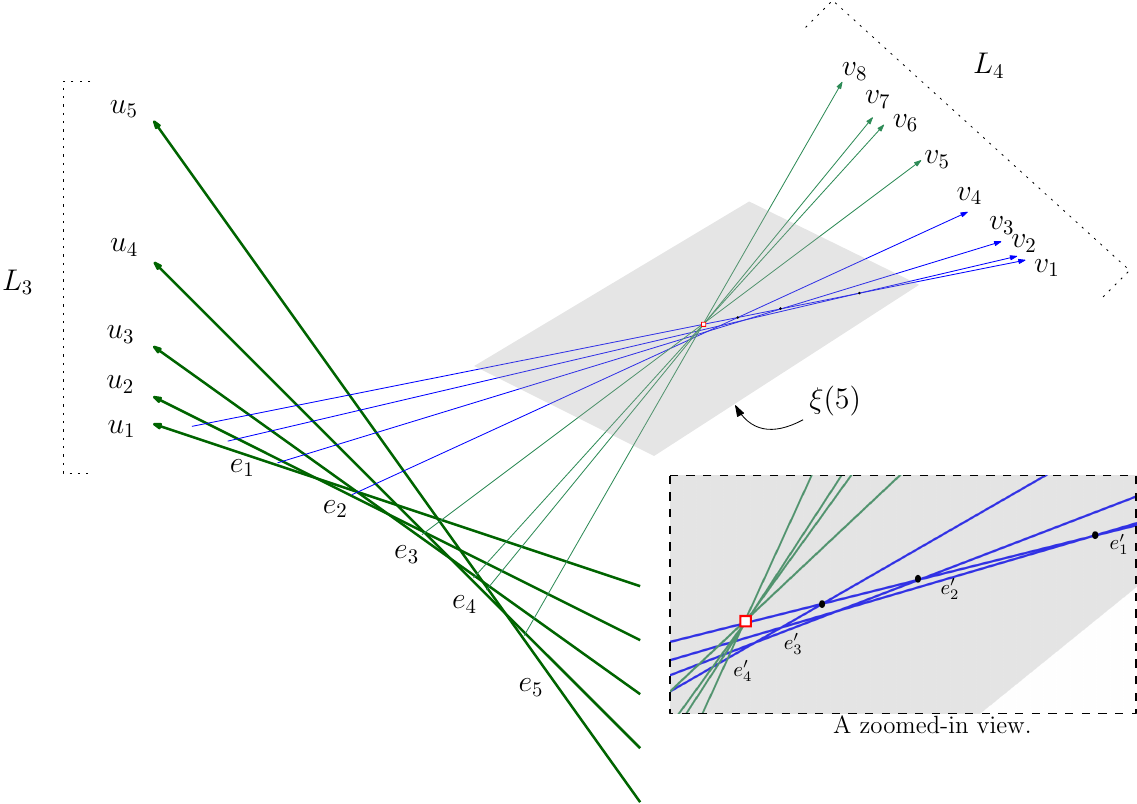}
    \caption{Illustration for the construction of a non-degenerate sail. Initially all the rays pass though the centroid of $\xi(5)$, which is marked in red square. The ray $v_1$ and $v_8$ stay at the same place. The rays $v_2,v_3,\ldots$ move  closer to intersect $v_1$ above the centroid forming the extremal path. The three rays $v_2,v_3,v_4$ moved so far are shown in blue. The intersection points between $v_1$ and $v_2,v_3,v_4$ are shown in black dots.}
    \label{fig:h11}
\end{figure}

\section{Construction of the Rays for the Internal Nodes of $P$}\label{app:P}

By the definition of admissible extension, each path in $P$ has either one or two internal vertices. Furthermore, its end vertices are two leaves of $T$ that are consecutive at  the same level of $T$. Let $a,b,c,d$ be a path in $P$, where $b$ and $c$ may coincide. Assume that $a,d$ belong to the $j$th level of $T$. Let $r_a,r_d$ be the rays of $a$ and $d$, respectively. We insert the rays $r_b$ and $r_c$ (for $b$ and $c$)  when constructing $R_{j+1}$. Let $e_a$ and $e_d$ be the segments corresponding to $r_a,r_d$ on  the extremal path of $R_j$. Since $a,d$ are consecutive in $\sigma_j$, the edges $e_a,e_d$  are consecutive on the extremal path of $R_j$.

We set the origins of both $r_b,r_c$  to a common point inside the cell immediately above $e_a$ and $e_d$. Similar to the construction in the previous section, the rays initially pass through the centroid  of $\xi(j+2)$. If $b\not=c$, then after constructing  $R_{j+1}$, we shift $r_c$ slightly by keeping it parallel to  $r_b$. We ensure that the  origin of $r_c$ remains within the same cell and it intersects the identical set of rays as $r_b$ (in the same order).  For example, consider  the path $(v_7,w_6,w_7,v_8)\in P$ in  Figure~\ref{fig:h2}, where the two rays of $w_6,w_7$ are parallel and originate inside  the cell  immediately above $e_{v_7},e_{v_8}$.

It now suffices to show that the adjacencies of $r_b$ and $r_c$ in $\overline{(T\cup P)}$ have been realized correctly, i.e., they   intersect  all rays corresponding to $\sigma_1,\ldots,\sigma_{j}$ except for the rays $r_a,r_d$. By the construction of the initial sail skeleton, $r_b,r_c$ intersect all rays in $\sigma_1,\ldots,\sigma_{j-2}$. Since $r_b,r_c$ start at the cell immediately above $e_a$ and $e_d$ (e.g., the cell shaded in gray in the zoomed-in view of Figure~\ref{fig:h2}), they intersect all  rays corresponding to $\sigma_{j-1}\setminus \{r_b,r_c\}$. Since $r_b,r_c$ initially pass through a common point of $\xi(j+2)$, they intersect all other rays in $\sigma_{j+1}$. The intersection graph remains the same during the perturbation that yields  $R_{j+1}$. Finally, in the case when $b\not= c$, we made $r_b$  and $r_c$ parallel. Consequently, all adjacencies of $b$ and $c$ in $\overline{(T\cup P)}$ are correctly realized.  

\section{Details of Section~\ref{sec:trans}}
\label{app:trans}
\begin{proof}[Proof of Lemma~\ref{lem:equiv}]
Let $r_1$ and $r_2$ be two rays in $U(S)$ and let $s_1$ and $s_2$ be the corresponding segments  in $S'$. 
If two segments in $S$ intersect, then their rays $r_1$ and $r_2$ in $U(S)$ intersect. By construction, $s_1$ and $s_2$ satisfy $(C_1)$--$(C_2)$, i.e., they are strictly independent. Consider now the  case when $s_1$ and $s_2$ are strictly independent. By conditions  $(C_1)$--$(C_2)$, extending the segments $s_1$ and $s_2$ beyond their free endpoints must create an intersection between the extended parts. Hence  $r_1$ and $r_2$ must intersect and consequently, their associated segments in $S$ intersect. 
\end{proof}
  
\section{Details of Section~\ref{sec:dp}}
\label{app:dp}

\begin{proof}[Proof of Lemma~\ref{encoding}]
We first show that $Z\cup (A\setminus A')$ is a strict independent set, i.e., every segment $s_p\in Z$ is strictly independent with every segment  $s_q\in (A\setminus A')$.
Since $(T\cup Z)$ is a strict independent set, we may assume that $s_q\not\in T$. Without loss of generality let $s_q$ be a segment of $\overline{A'}_\ell$ with the free endpoint on some line $\ell \in \{\ell_1,\ldots,\ell_k\}$ (Figure~\ref{fig:dp1}(a)). Since $s_q\not\in T$, $T$ must contain two segments $s_a$ and $s_b$, which are the highest and lowest segments of $\overline{A'}_\ell$  with free endpoints on $\ell$. Suppose for a contradiction that $s_p$ and $s_q$ are not strictly independent. 

Assume first that condition $C_1$ is violated, i.e., $s_p$ intersects $s_q$. Since the segments $s_a, s_q, s_b, s_i$ belong to $A$, they form a strict independent set. The segment $s_i$ lies between $s_q$ and $s_p$. Hence to avoid intersection with $s_q$ and $s_p$, $s_i$ must be smaller than $s_p$, which contradicts our assumption that  $s_p \in Z$ because no segment of $Z$ is higher than $s_i,s_j$. 

Assume now that condition $C_2$ is violated.  If the extended part of $s_q$ intersects $s_p$, then $s_p$ must be higher than $s_i$, which contradicts  our assumption that  $s_p \in Z$. 

If the extension of $s_p$ intersects $s_q$, then we show that $s_p$ cannot be strictly independent with $\{s_a,s_b\}$, which contradicts that $(T\cup Z)$ is a strict independent set.  Since $s_a$ and $s_b$ are strictly independent, they cannot lie on opposite halfplanes of $\ell$, as this would violate condition $C_2$. Consider now without loss of generality that $s_a$ and $s_b$ lie to the left halfplane of $\ell$. We now consider the following two cases.

\begin{enumerate}
    \item[-] Assume that $s_a$ is to the left of $s_q$. Since  $s_a$ is not intersected by the extension of $s_p$,   the free endpoint of $s_a$ lies on the left halfplane of the line $L_p$ determined by $s_p$, whereas the free endpoints of $s_q$ and $s_b$ lie on the right halfplane of $L_p$. Consequently, $s_b$ must intersect $s_p$ or its extension.

    \item[-] If $s_a$ is between $s_q$ and $s_p$, then to avoid intersection with the extension of $s_p$, $s_a$ must be smaller than $s_q$. This contradicts our  assumption that $s_q$ is smaller than $s_a$. 
\end{enumerate}

\noindent
We have now proved that $Z\cup (A\setminus A')$ is a strict independent set. Since $Q[s_i ,s_j, T]$ maximizes the number of segments that can be added to $T$ such that the resulting set is strictly independent, $|Z|$ cannot be smaller than $|A'|$. If $|Z|>|A'|$, then it contradicts the assumption that $A$ is a maximum strict independent set. Hence $Q[s_i ,s_j, T]$ is equal to $|A'|$.
\end{proof}

\begin{proof}[Proof of Lemma~\ref{recurrence}]
By definition,  $Q[s_i ,s_j, T]$  denotes the maximum number of segments from $\sigma[i+1\ldots j-1]$  that are no higher than $s_i,s_j$ and can be added to $T$ such that the resulting set is strictly independent. Therefore, in the base case, where no such segment exists, we set $Q[s_i ,s_j, T]$ to 0.

We now consider the general step (Figures~\ref{fig:dp1}(b)--(d)).  Let $Z$ be the set of segments corresponding to $Q[s_i ,s_j, T]$, which by definition forms strictly independent set with $T$. Let $s_k$ be the highest segment of $Z$ that is no higher than $s_i,s_j$. 
Let $Z_\ell$ and $Z_r$ be the segments of $Z$ that lie to the left and right of $s_k$, respectively. It now suffices to show  that $Q[s_i,s_k,T_1] = |Z_\ell|$ and the segments corresponding to $Q[s_i,s_k,T_1]$ are strictly independent with  $Z_r$. An analogous argument can be used to show $Q[s_i,s_k,T_2] = |Z_r|$. Together with $s_k$, we get the desired recurrence relation. 



Since $s_k\in Z$, $(Z_\ell \cup\{s_k\})$ is a strict independent set. Since $(T_1\setminus \{s_k\})\subset T$, the set $(Z_\ell \cup (T_1\setminus \{s_k\}))$ is strictly independent. Therefore, $(Z_\ell \cup T_1)$ is a strict independent set. Since $s_k$ is a highest segment in $Z$, no segment in $Z_\ell$ is higher than $s_k$. Observe that the segments corresponding to $Q[s_i,s_k,T_1]$ and $Z_r$ lie to the left and right of $s_k$, respectively, and all these segments are strictly independent to  $s_k$. Furthermore, none of these segments is higher than $s_k$. Therefore, the segments corresponding to $Q[s_i,s_k,T_1]$ are strictly independent with the set $Z_r$. By definition, $Q[s_i,s_k,T_1]$ maximizes the number of segments that satisfy the same hight and strict independence properties as that of $Z_\ell$. Therefore, $Q[s_i,s_k,T_1]\ge |Z_\ell|$. If $Q[s_i,s_k,T_1] > |Z_\ell|$, then we could delete $Z_\ell$ from $Z$ and add  all the segments corresponding to $Q[s_i,s_k,T_1]$ to $Z$ to improve $Q[s_i,s_j,T]$, which contradicts our initial assumption about the optimality of $Q[s_i,s_j,T]$. 
\end{proof}

\section{Details of Section~\ref{sec:approx}}
\label{app:approx}
\begin{proof}[Proof of Theorem~\ref{thm:approx}]  
Let $S$ be a set of $n$ grounded segments and let $G(S)$ be the corresponding intersection graph. Assume that the free endpoints of the segments are in general position, i.e., no two share the same $x$- or $y$-coordinates and no segment contains an endpoint of another segment. Let $F$ be the sequence of free endpoints in increasing order of $x$-coordinates. By Erd\H{o}s-Szekeres theorem~\cite{erdos1935combinatorial}, there exists a subsequence of $\Omega(\sqrt{n})$ free endpoints in $F$ that have either monotonically increasing or decreasing $y$-coordinates. One can use this result to obtain a partition of $F$ into subsequences $F_1,\ldots,F_k$, where $k=O(\sqrt{n})$, and every $F_i$, where $1\le i\le k$, contains at most $\sqrt{n}$ elements~\cite{mondal2021simultaneous}. Similarly, we partition the segments of each $F_i$ into a set of subsequences $\sigma_{i,1},\ldots, \sigma_{i,z}$, where $z=O(\sqrt[4]{n})$, and for each $\sigma_{i,j}$, where $1\le j\le k$, the $x$-coordinates of its segments on the ground line are either increasing or decreasing. 

Let $M$ be a maximum clique in $G(S)$. Then there exist  indices $i,j$, such that $\sigma_{i,j}$ contains at least $\frac{|M|}{zk} = \Omega(n^{-3/4}|M|)$ vertices of $M$. Let $M(\sigma_{i,j})$ be the maximum clique in the intersection graph corresponding to the segments of $\sigma_{i,j}$. Then $|M(\sigma_{i,j})|\ge \frac{|M|}{zk}$. We now show that $M(\sigma_{i,j})$ is cocomparability graph and hence it suffices to return a clique found over all subsequences. 

By Lemma~\ref{lem:umbrella},  it suffices to show that $\sigma_{i,j}$ is an umbrella-free ordering for the intersection graph corresponding to $\sigma_{i,j}$, and hence $M(\sigma_{i,j})$ can be computed in polynomial time~\cite{gavril2000maximum}.  Let $s_p,s_q,s_r$ be three segments in this order in   $\sigma_{i,j}$. Assume for a contradiction that $s_p$ and $s_r$ intersect, but none of them intersects $s_q$. Since $\sigma_{i,j}$ is a subsequence of $F_i$, the free endpoints $s^t_p,s^t_q,s^t_r$ are monotonically increasing or decreasing. Therefore, the $y$-coordinate of $s^t_q$ is in between the $y$-coordinates of $s^t_p$ and $s^t_r$. Let $\Delta$ be the triangle determined by the segments $s_p,s_r$ and the ground line. Note that $s_q$ lies between $s_p$ and $s_r$ on the ground line. Since $s_q$ neither intersects $s_p$ nor $s_r$, it lies   interior to $\Delta$, which contradicts that the $y$-coordinate of $s^t_q$ is in between the $y$-coordinates of $s^t_p$ and $s^t_r$.
\end{proof}

\section{Details of Section~\ref{sec:unit}}
\label{app:cocom}

\begin{remark}
\label{rem:trivial}
    Let $\Delta abc$ be a triangle with base $ab$ and peak $c$. Let $q$ be a point on the segment $ab$. Then $|qc|\le \max\{|ac|,|bc|\}$.
\end{remark}

\begin{proof}[Proof of Lemma~\ref{slide}]
Let $\sigma=(s_\ell,\ldots,s_r)$ be the order of the segments on the ground line. Note that by definition, all the segments in $G(s_\ell ,s_r)$ intersect both $s_\ell$ and $s_r$. We now show that $\sigma$ is an umbrella-free ordering for $G(s_\ell ,s_r)$, and thus $G(s_\ell ,s_r)$ is a cocomparability graph by Lemma~\ref{lem:umbrella}. Specifically, let $s_a,s_c,s_b$ be three segments that appear in $\sigma$ in this order. We now show that if $s_a$ and $s_b$ intersect, then at least one of them must intersect $s_c$. 

 For a segment $s_a$, we refer to its endpoint on the ground line as $s^g_a$ and the free endpoint   as $s^t_a$. If a pair of segments $s_a$ and  $s_b$ intersect, then we denote the intersection point as $i(s_a,s_b)$. Let $W$ be the closed wedge-shaped region above the the two lines determined by $s_\ell$ and $s_r$. We distinguish two cases depending on whether $i(s_a,s_b)$ lies in $W$. 


{\bf Case 1 ($i(s_a,s_b)\in W$)
.} Suppose for a contradiction that $s_c$ does not intersect $s_a$ and $s_b$. Since $s_c$ lies between $s_a$ and $s_b$,  the point $s^t_c$ must be inside the triangle $\Delta s^g_a i(s_a,s_b)s^g_b$. Since $s_c$ intersects both $s_\ell$ and $s_r$, $s^t_c$ belongs to $W$. 
Without loss of generality assume that extending the segment $s_c$ upward intersects $s_b$ at a point $w$ (Figure~\ref{fig:gr}(a)). We now  move $w$ along $s_b$ so that the length $|ws^g_c|$ increases. If $w$ moves  upward along $s_b$, then it reaches $i(s_a,s_b)$. We now apply Remark~\ref{rem:trivial} on $\Delta s^g_ai(s_a,s_b)s^g_b$ to observe that $|ws^g_c|$ is at most one unit. Since the length of $s_c$ is strictly smaller than $|ws^g_c|$, $s_c$ cannot be of unit length. If $w$ moves  downward along $s_b$, then it reaches $i(s_\ell,s_b)$. Note that we may also hit $i(s_r,s_b)$ while moving $w$, but we keep moving until we reach $i(s_\ell,s_b)$  (Figure~\ref{fig:gr}(b)). We now can use the same argument by applying  Remark~\ref{rem:trivial} on $\Delta s^g_\ell i(s_\ell,s_b)s^g_b$.

{\bf Case 2 ($i(s_a,s_b)\not\in W$)
.} If $i(s_a,s_b)\in \Delta s^g_\ell i(s_\ell, s_r) s^g_r$, then this triangle contains $\Delta s^g_a i(s_a, s_b) s^g_b$ (Figure~\ref{fig:gr}(c)). Since $s_c$ lies between $s_a$ and $s_b$, $s_c$ cannot reach the segments $s_\ell$ and $s_r$.  If $i(s_a,s_b)\not \in \Delta s^g_\ell i(s_\ell, s_r) s^g_r$, then without loss of generality assume that $i(s_a,s_b)$ lies below the line determined by $s_\ell$ (Figure~\ref{fig:gr}(d)). Since $\Delta s^g_a i(s_a, s_b) s^g_b$ lies below  the line determined by $s_\ell$, $s_c$ cannot intersect $s_\ell$.
\end{proof}

\section{Details of Section~\ref{sec:gd}}
\label{app:gd}
\begin{proof}[Proof of Theorem~\ref{thm:polydisk}]
For each disk $D\in\mathcal{D}$, we compute a maximum clique $M(D)$ that contains  $D$ as its smallest disk. A maximum clique of $G(\mathcal{D})$ is then obtained by taking the largest among these cliques. 
To compute $M(D)$, we can only examine the set of disks  $S\subseteq \mathcal{D}$ that intersect $D$ (Figure~\ref{fig:gd}(a)). It now suffices to show that the intersection graph of $S$ is   a cobipartite graph. 
Let $S_\ell\subset S$ be the disks such that for each disk $D' \in S$, the $x$-coordinate of $D'$ is at most that of the center of $D$. Let $S_r$ to be the set of  remaining disks of $S$. We now show that the disks in $S_\ell$ mutually intersects, and By symmetry, the same argument applies to $S_r$. 

\begin{figure}[h]
    \centering
    \includegraphics[width=\linewidth]{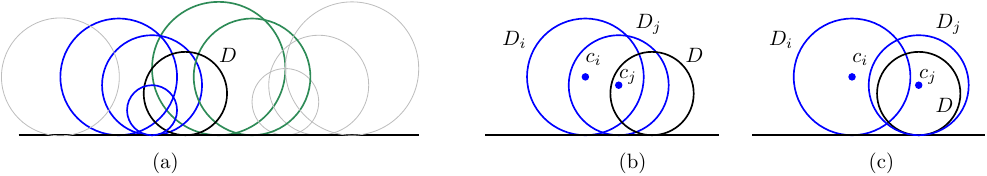}
    \caption{Illustration for Theorem~\ref{thm:polydisk}. (a) $\mathcal{D}$, where $D,S_\ell$ and $S_r$ are shown in black, blue and green, respectively. (b)--(c) Illustration for moving $D_j$ to contain $D$. }
    \label{fig:gd}
\end{figure}
Let $D_i, D_j \in S_\ell$ be two disks with radii $r_i$ and $r_j$, and centers $c_i$ and $c_j$, respectively. Let $d$ be the Euclidean distance between $c_i$ and $c_j$.  
Assume without loss of generality that the $x$-coordinate of $c_i$ is at most the $x$-coordinate of $c_j$. We  move $D_j$ horizontally such that the $x$-coordinate of $c_j$ becomes equal to that of $D$. Let $d'$ be the new distance between $D_i$ and $D_j$. We  now observe that  $d'\ge d$. Since $D$ is the smallest disk and both $D_j$ and $D$ are grounded, $D_j$ now entirely covers $D$ (Figures~\ref{fig:gd}(b)--(c)). Since $D_i$ intersects $D$, it must intersect $D_j$, i.e., $d'\le r_i+r_j$. Since $d\le d'$, $D_i$ must intersect $D_j$ in their original positions. 
\end{proof}

\section{Details of Section~\ref{sec:piercing}}
\label{app:piercing}

\begin{proof}[Proof of Lemma~\ref{lem:contain}] The proof follows since every point $t$ on $pc_2$ is closer to $c_1$ than $c_2$.
\end{proof}

\begin{proof}[Proof of Lemma~\ref{lem:arc}]
 Let $R$ be the region of $D_1$ that lies outside $D_2$. Since two distinct circles   can intersect in at most two points, $R$ is connected. Let $C'$ be the largest arc on the boundary of $D_1$ that also lies on $R$. Since $r\not \in D_2$, we have $r\in R$ and hence the arc $C$ coincides with $C'$.   
\end{proof}

\begin{proof}[Proof of Lemma~\ref{lem:120}]
Suppose for a contradiction that $D_1$ and $D_2$ intersect. Let $c_i $ be the center  of $D_i$, where $i\in\{1,2\}$. Let $D$ be a disk containing $D_1$ that touches $p$ and $M$ and has radius three. Since $D_1$ and $D_2$ intersect and both avoid $o$, $D$ must intersect $D_2$. Therefore, we  assume without loss of generality that $D_1$ is of radius three, and similarly $D_2$ is of radius three. 

 Note that $D_1$ and $D_2$ both avoid containing $o$, and $D_1$ touches $p$ and $M$. If $c_1$ lies above $L(c_2p)$, then the part of $D_1$ below $L(c_2p)$ lies below the line that passes through $p$ and perpendicular to $L(pc_2)$. 
 Since we assume $D_1$ intersects $D_2$ below the line $y=- \frac{x}{\sqrt{3}}$, $c_1$ must lie on or below $L(c_2p)$. 
 Figure~\ref{fig:lem120} illustrates the configuration. Increasing the radius of $M$ keeps $c_2$ on line $y=-x/\sqrt{3}$, but moves $c_1$ further away from $o$, which keeps $c_1$ below $L(c_2p)$. We now rotate $D_1$ clockwise such that it touches $p$ and $|c_1o|$ becomes $3(\frac{2}{\sqrt{3}}-1)$. Let $c_{old}$ be the position of $c_1$ before the rotation. Let $Z$ be a circle with center $c_2$ and radius $c_{old}$. Since $p$ lies inside $Z$, and  on or above $L(c_{old}c_2)$, $c_1$ lies inside $Z$. Therefore, $D_1$ still intersects $D_2$ after rotation.  
\begin{figure}
    \centering
    \includegraphics[width=0.7\linewidth]{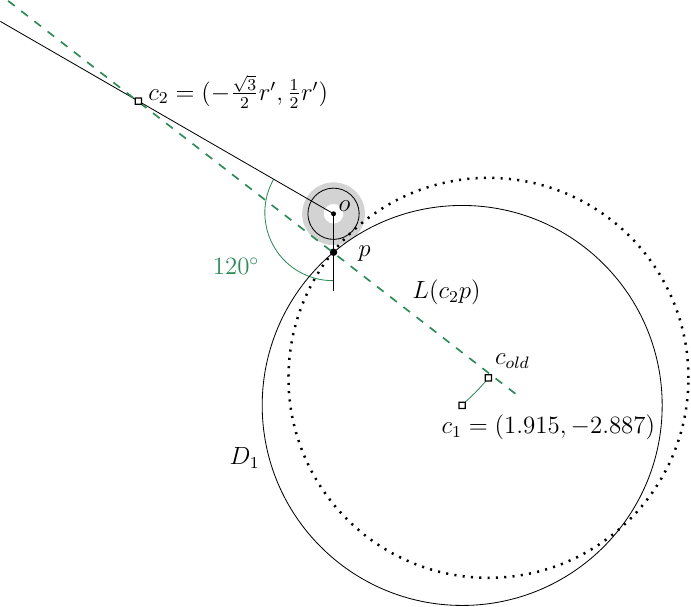}
    \caption{Illustration for Lemma~\ref{lem:120}. The gray annulus illustrates the range for $M$. The dotted disk is the position of $D_1$ before rotation.}
    \label{fig:lem120}
\end{figure}

We now have $c_1$ at $(1.915,-2.887)$, which can be obtained from the equations $x^2+(y+\tfrac{1}{\sqrt{3}})^2 = 9$ and $x^2+y^2 = r'^2$, where $r' = 3+r(\frac{2}{\sqrt{3}}-1)$. The center $c_2$ is at $(-\tfrac{\sqrt{3}}{2}r',\tfrac{1}{2}r')$. We now have $
    |c_1c_2| = \sqrt{(1.915+\tfrac{\sqrt{3}}{2}r')^2+( -2.887 - \tfrac{1}{2}r')^2}$. For $r\in [1,3]$, we have $|c_1c_2|\in  [6.44, 6.74]$, which contradicts the assumption that $D_1$ and $D_2$ intersects.  
\end{proof}

\subsection{Details of Case 2}\label{app:case2}

\smallskip
\noindent
{\bf Case 2.1 ($D'$ intersects $D_\ell$ below $L(oC)$).}  Let $r_\ell$ be the radius of $D_\ell$. We first translate $D'$ such that it touches $A$ and $Q$ and also maintains the intersection with $D_\ell$, as follows.  We translate $D'$ along $oc'$ until it touches $A$, which keeps the intersection with $Q$. By Lemma~\ref{lem:contain}, $D'$ continues to intersect $D_\ell$. Since $c_\ell$ in the second quadrant (Lemma~\ref{lem:tan}) and $D_\ell$ avoids $o$, $D_\ell$ cannot contain $A$. Since $D_\ell$ and $D'$ both avoid $A$, and since $D_\ell$ and $D'$ intersect below $L(oC)$, $A$ is above $L(c'c_\ell)$. 
We now rotate $D'$ clockwise such that the distance $|c'A|$ remains fixed but  $c'$ touches $Q$ (Figure~\ref{fig:transf2}(a)). Let $c'_{old}$ be the position of $c'$ before rotation. Consider a point $A'$ where $|c'_{old}A'| = |c'A'| = |c_\ell c'|$ (Figure~\ref{fig:transf2}(b)). Since  $A$ lies above $L(c_\ell c'_{old})$, $A'$ lies above $L(c_\ell c'_{old})$. Moving $A'$ to $c_\ell$ without changing the distance $|A'c'_{old}|$ decreases the angle $\angle c'c'_{old}A'$. We now consider the angles of $\Delta c' A'c_{old}$ to observe that 
$|c_\ell c'|\le |c_\ell c_{old}|$. Therefore, after rotation,  $D'$ continues to intersect  $D_\ell$.  
\begin{figure}[h]
    \centering
    \includegraphics[width=0.7\linewidth]{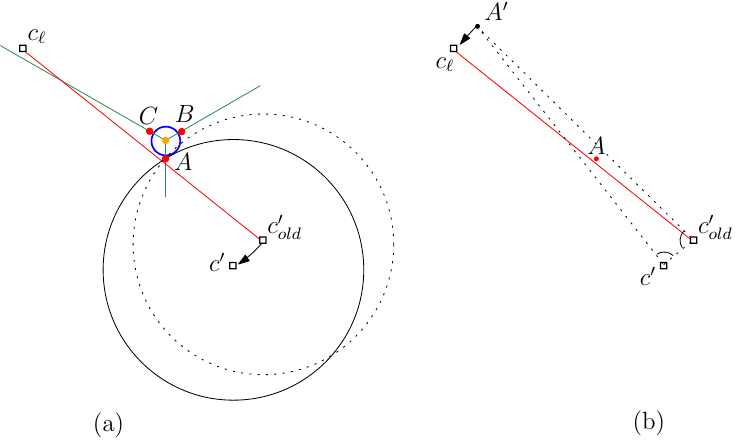}
    \caption{Illustration for moving $D'$  to touch $A$ and $Q$. The origin $o$ is in orange.}
    \label{fig:transf2}
\end{figure}

 We now define a disk $D''$ with center $c''$ and radius three that touches $A$ and has its center on $L(oc')$. Since $D''$ contains the part of $D'$ to the left of $L(oA)$, $D''$ intersects $D_\ell$ (Figure~\ref{fig:transf3}(a)).  
Since the largest radius among the disks of $\mathcal{D}$ is three, the radius of $Q$ is at most $3(\frac{2}{\sqrt{3}}-1)$ (Lemma~\ref{lem:hradi}).  
We now rotate $D''$ clockwise around $A$ such that the distance $|oc''|$ becomes $3+3(\frac{2}{\sqrt{3}}-1)$. We can now use the same argument that we used for rotating $D'$ to observe that $D''$ continues to intersect $D_\ell$. Since $(\frac{2}{\sqrt{3}}-1)\le r_q\le 3(\frac{2}{\sqrt{3}}-1)\approx 0.46$ and since $D''$ touches $A$, one can observe that $c''$ lies in the fourth quadrant.

We now consider two scenarios based on the position of $c_\ell$. 
Assume first that the center of $c_\ell$ lies above $L(oC)$. We roll $D_\ell$ over $Q$ counter-clockwise such that its center $c_\ell$ lies on  $L(oC)$. Since $c_\ell$ lies at the second quadrant (Lemma~\ref{lem:tan}), rolling over $Q$ decreases the $y$-coordinate of $c_\ell$ (Figure~\ref{fig:transf3}(b)). Let $c_{old}$ be the position of $c_\ell$ before moving $D_\ell$. Then $\angle c_\ell c_{old}c'' \le \angle c'' c_\ell  c_{old}$, and hence  $D_\ell$ continue to intersect $D'$. We now apply Lemma~\ref{lem:120} considering $D''$ and $D_\ell$ as $D_1$ and $D_2$ to reach a contradiction.

Assume  now that  $c_\ell$ lies below $L(oC)$.  Lemma~\ref{lem:tan} implies that  $D_\ell$ touches the \emph{upper-left arc} of $D_b$, i.e., the clockwise arc on the boundary from the leftmost point to the topmost point of $D_b$.   We now show that the upper-left arc of $D_b$ is outside of $D''$, which contradicts that $D''$ intersects $D_\ell$ (Figure~\ref{fig:transf3}(c)).

\begin{figure}[h]
    \centering
    \includegraphics[width=.9\linewidth]{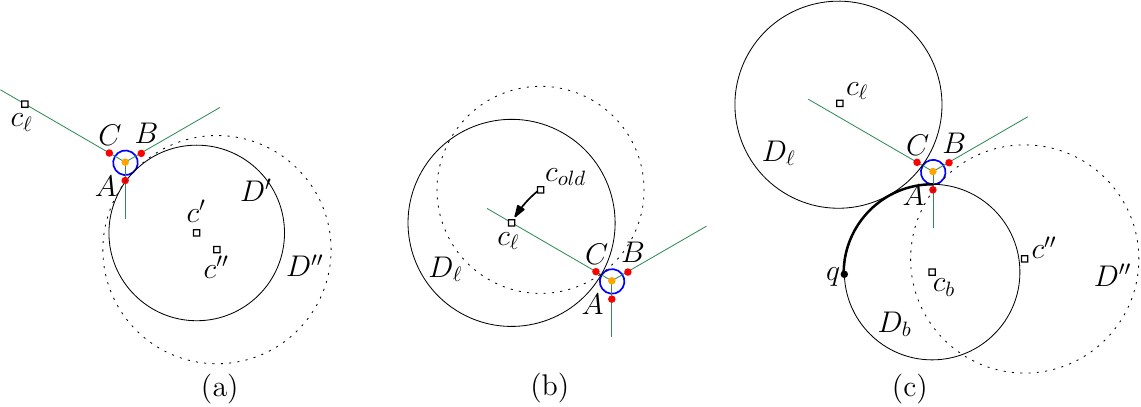}
    \caption{The origin  $o$ is in orange. Illustration showing (a) $D''$ (b) $D_\ell$ rolls over $Q$, and (c) $q$ lies outside of $D''$. }
    \label{fig:transf3}
\end{figure}

We first compute the center $c''=(x,y)$ of $D''$.  Since $c''$ is at distance $3+3(\frac{2}{\sqrt{3}}-1)=2\sqrt{3}$ from $o$, we have $x^2+y^2=(2\sqrt{3})^2$. Furthermore, $c''$ is at distance $3$ from $A$, and hence $x^2+(y+\frac{1}{\sqrt{3}})^2=3^2$. Consequently, $c''\approx (1.915,-2.887)$. Let $r_b$ be the radius of $D_b$. The distance between $c''$ and the leftmost point $q=(-r_b,-r_b-\frac{1}{\sqrt{3}})$ of $D_b$  is $|c''q|\approx \sqrt{(r_b+1.915)^2 + (r_b-2.310)^2 }$, 
which is larger than 3.19 for every $r_b\ge 1$. Therefore, $D''$ cannot contain $q$. By Lemma~\ref{lem:arc}, the upper-left arc of $D_b$ lies outside of $D''$, as required. 

\smallskip 
\noindent
{\bf Case 2.2 ($D'$ intersects $D_\ell$ above $L(oC)$).} 
If $c_\ell$ lies below $L(oC)$,  then we use the same argument of Case 2.1 when $c_\ell$ is above $L(oC)$. Specifically, we move $D'$ to touch $B$ and $Q$, and roll $D_\ell$ over $Q$ such that $c_\ell$ lies on $L(oC)$ to apply the arguments of Case 2.1 (Figure~\ref{fig:transf4}(a)).

We now assume that  $c_\ell$ lies above $L(oC)$ and $c'\in W(\angle BoA)$. 
Consider the triangle $\Delta c_b o c_\ell$  where $|c_\ell o| = r_\ell + r_q $, $|c_b o| = r_b + r_q $ and $|c_\ell c_b| = r_\ell + r_b $ (Figure~\ref{fig:transf4}(b)). The angle  $\angle c_b o c_\ell  = \arccos \left( \frac{(r_\ell+r_q)^2 + (r_b+r_q)^2 -(r_\ell+ r_b )^2 }{2(r_\ell+r_q)(r_b+r_q)}\right) = \arccos \left( \frac{r_\ell r_q +  r_b r_q + r_q^2 -r_\ell r_b    }{(r_\ell+r_q)(r_b+r_q)}\right)$ $= \arccos \left( 1-\frac{2r_\ell r_b   }{(r_\ell+r_q)(r_b+r_q)}\right)$.  
 For a fixed $r_\ell,r_b$, the angle increases when $r_q$ decreases. For fixed $r_q$, the angle increases when $r_\ell$ and $r_b$ increase. Therefore, $\angle c_b o c_\ell$ is maximized when    when $r_q =  (\frac{2}{\sqrt{3}}-1)$ and $r_\ell=r_b=3$, and hence $\angle c_b o c_\ell \approx 143.97^\circ $.

Similar to Case 2.1, we show the existence of  a disk $D''$ with center $c''$ and radius three that touches $B$, has  distance $|oc''| = 3+r_q$ and intersects $D_\ell$. If $r_q=3(\frac{2}{\sqrt{3}}-1)$, then $c''$ is approximately $(3.457,-0.215)$ with $L(oc'')$ making an angle $-3.56^\circ$ with positive x-axis. Hence it suffices to consider the coordinate of $c'' = (|oc''|\cos \theta, -|oc''|\sin \theta)$, where $0^\circ \le \theta \le 90^\circ$.
 \begin{figure}[h]
    \centering
    \includegraphics[width=.85\linewidth]{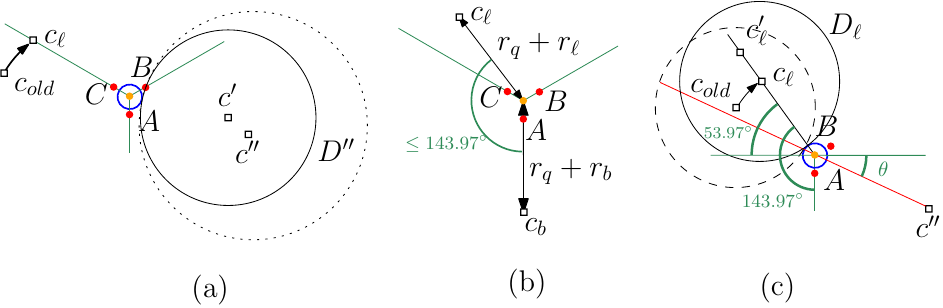}
    \caption{Illustration for Case 2.2. (a) The setup when $c_\ell$ lies below $L(oC)$. (b)--(c) The center $c_\ell$ lies above $L(oC)$.}
    \label{fig:transf4}
\end{figure}

We now roll $D_\ell$ over $Q$ clockwise  such that $c_\ell$ lies on the line $L$ with angle of inclination $143.97^\circ$ (Figure~\ref{fig:transf4}(c)). Since $D_\ell$ intersects  $D''$ above $L(oC)$,  $c_\ell$ must lie above $L(oc'')$. Hence $D_\ell$ continues to intersect $D''$. Let $D'_\ell$ be a disk of radius three tangent to $Q$ at the point where   $D_\ell$ touches $Q$. Since $D'_\ell$ contains $D_\ell$, it intersects $D''$. Since the center $c'_\ell$ of   $D'_\ell$ lies on $L(oc_\ell)$,  $\angle c_b o c'_\ell \le 143.97^\circ $. Therefore, we may assume that the center $c'_\ell$ has the coordinate $ (-r \cos 53.97^\circ , r \sin 53.97^\circ )$, where $r= 3 + r_q$ (Figure~\ref{fig:transf4}(c)).

We now reach a contradiction by showing $|c'_\ell c''|$ is larger than six, as follows.
\begin{align*}
|c'_\ell c''| &=  \sqrt{(-r \cos 53.97^\circ-|oc''|\cos \theta)^2 + ( r \sin 53.97^\circ+|oc''| \sin\theta )^2} \\
&= \sqrt{r^2+|oc''|^2+2r|oc''| \cos 53.97^\circ \cos \theta + 2r|oc''| \sin 53.97^\circ \sin\theta}\\
&=\sqrt{r^2+|oc''|^2+2r|oc''|  \cos (53.97^\circ - \theta) }\\
& = \sqrt{2r^2+2r^2\cos (53.97^\circ - \theta) }\\
\end{align*}


If $r_q= 2(\frac{2}{\sqrt{3}}-1)$, then $c''\approx (3.039,-1.308)$  with $L(oc'')$ making an angle $-23.3^\circ$ with positive x-axis. Therefore, if $ (\frac{2}{\sqrt{3}}-1)\le r_q<  2(\frac{2}{\sqrt{3}}-1)$, then $23.3^\circ \le \theta \le 90^\circ$. The distance $|c'_\ell c''|$ is minimized when $\theta=90^\circ$. Since distance decreases as $r_q$ decreases,  we have $|c'_\ell c''|\ge \sqrt{2 (3+(\frac{2}{\sqrt{3}}-1))^2 (1+\cos 36.03^\circ) } >6$.

If $ 2(\frac{2}{\sqrt{3}}-1)\le r_q\le  3(\frac{2}{\sqrt{3}}-1)$, then $0^\circ \le \theta \le 23.3^\circ$. The distance $|c'_\ell c''|$ is minimized when $\theta=23.3^\circ$. Since distance decreases as $r_q$ decreases,  we have $|c'_\ell c''|\ge \sqrt{2 (3+2(\frac{2}{\sqrt{3}}-1))^2 (1+\cos 23.3^\circ) } >6$.





\end{document}